\tikzstyle{internal} = [rectangle, rounded corners, minimum width=1cm, minimum height=1cm,text centered, draw=black]
\tikzstyle{leaf} = [diamond, minimum width=1cm, minimum height=1cm, text width = 1.2cm, text centered, draw=black]
\tikzstyle{arrow} = [thick,->,>=stealth]
\theoremstyle{plain}
\newtheorem{theorem}{Theorem}
\newtheorem{corollary}{Corollary}
\newtheorem{lemma}{Lemma}
\newtheorem{proposition}{Proposition}
\theoremstyle{definition}
\newtheorem{definition}{Definition}
\newtheorem{example}{Example}
\newlist{parts}{enumerate}{1}
\Crefname{partsi}{Part}{Parts}
\setlist[parts,1]{label=\alph*.,ref=\alph*}
\Crefname{figure}{Figure}{Figures} % Figure instead of Fig.
\Crefname{equation}{Equation}{Equations} % Equation instead of Eq.
\DeclareMathOperator{\poly}{poly}
\newcommand{\eqdef}{\triangleq}
\newcommand{\prefs}{\mathcal{P}}
\newcommand{\matchings}{\mathcal{M}}
\newcommand{\impl}{\mathcal{I}}
\newcommand{\matching}{\mu}
\title{Stable matching mechanisms are \\ not obviously strategy-proof}
\author{Itai Ashlagi \and
Yannai A. Gonczarowski \thanks{First draft: November 2015.  Ashlagi: Management Science \&  Engineering, Stanford University, \emph{email}: \mbox{\href{mailto:iashlagi@stanford.edu}{iashlagi@stanford.edu}}. Gonczarowski: Einstein Institute of Mathematics, Rachel \& Selim Benin School of Computer Science \& Engineering, and the Federmann Center for the Study of Rationality,
The Hebrew University of Jerusalem; and Microsoft Research, \emph{email}: \href{mailto:yannai@gonch.name}{yannai@gonch.name}. This paper greatly benefited from discussions with \mbox{Sophie Bade}, \mbox{Elchanan Ben-Porath}, \mbox{Shengwu Li}, \mbox{Jordi Mass\'{o}}, \mbox{Muriel Niederle}, \mbox{Noam Nisan}, \mbox{Assaf Romm}, and \mbox{Peter Troyan}. \mbox{Itai Ashlagi} is supported by  NSF grant SES-1254768. \mbox{Yannai Gonczarowski} is supported by the Adams Fellowship Program of the Israel Academy of Sciences and Humanities; his work is supported by the European Research Council under the European Community's Seventh Framework Programme (FP7/2007-2013) / ERC grant agreement no.\ 249159, by ISF grant 1435/14 administered by the Israeli Academy of Sciences, and by Israel-USA Bi-national Science Foundation (BSF) grant number 2014389.}}
\date{July, 2018}
\begin{document}
\maketitle

\begin{abstract}
Many two-sided matching markets, from labor markets to school choice programs, use a clearinghouse based on  the applicant-proposing deferred acceptance algorithm, which is well known to be strategy-proof for the applicants. Nonetheless, a growing amount of empirical evidence reveals that applicants misrepresent their preferences when this mechanism is used.
This paper shows that no mechanism that  implements a stable matching is \emph{obviously strategy-proof} for any side of the market, a stronger incentive property  than strategy-proofness that was introduced by \cite{Li2015}. A stable mechanism that is obviously strategy-proof for applicants is introduced for the case in which agents on the other side have acyclical preferences.
\end{abstract}

\noindent
Keywords: stable matching; obviously strategy-proof; obvious strategy-proofness; matching; mechanism

\vspace{.5em}

\noindent
JEL codes: D4, D47

\section{Introduction}
\label{section:intro}

A number of labor markets and school admission programs that can be viewed as two-sided matching markets  use centralized mechanisms to match agents on both sides of the market (or agents on one side of the market and objects on the other side of the market).
One important criterion in the design of such mechanisms is stability \citep{Roth2002-economist-engineer}, requiring that no two agents, one from each side of the market, prefer each other over the partners with whom they are matched. Another highly desired property is strategy-proofness, which alleviates agents' incentives to behave strategically.\footnote{See also \citet{pathak2008leveling}, which finds that non-strategy-proof mechanisms favor sophisticated players over more na{\"i}ve players.}

Indeed, many clearinghouses have adopted in recent years the remarkable deferred acceptance (DA) mechanism \citep{Gale-Shapley},\footnote{Examples include the National Resident Matching Program \citep{roth1984evolution}, as well as school choice programs in Boston \citep{aprs2005} and New York \citep{apr2009} \citep[see also][]{as2003}.} which finds a stable matching and is strategy-proof for one side of the market, namely the proposing side in the DA algorithm \citep{Dubins-Freedman}.\footnote{This mechanism is  also approximately strategy-proof for all participants in the market \citep{immorlica2005marriage,kojima2009incentives,akl2013}.}\textsuperscript{,}\footnote{Indeed, removing the incentives to ``game the system" was a key factor in the city of Boston's decision   to replace its school assignment mechanism in 2005 \citep{abdulkadiroglu2006changing}.} Interestingly, although participants are advised that it is in their best interest to state their true preferences, empirical evidence suggests that a significant fraction nonetheless attempt to strategically misreport their true preferences \citep{HMRS2017}; this was observed in experiments \citep{chen-sonmez2006}, in surveys \citep{rees2016suboptimal}, and in the field \citep{HRS2016,SS2017}. This paper asks whether one can implement the deferred acceptance outcome via a mechanism whose description makes its strategy-proofness more apparent. Toward this goal, we adopt the notion of \emph{obvious strategy-proofness}, an incentive property  introduced by \cite{Li2015} that is stronger than strategy-proofness.

\cite{Li2015} formulated the idea that it is ``easier to be convinced'' of the strategy-proofness of some mechanisms over others.
He introduces, and characterizes, the class of \emph{obviously strategy-proof}
mechanisms. He shows that, roughly speaking,
obviously strategy-proof mechanisms are those whose strategy-proofness can be proved even
under a cognitively limited proof model that does not allow for contingent reasoning.\footnote{\label{second-price}For instance, this notion separates sealed-bid second-price auctions from ascending auctions (where  bidders only need to decide at any given moment whether to quit or not) and provides a possible explanation as to why more subjects have been reported to behave insincerely in the former than in the latter \citep{kagel1987information}.}
In his paper, \citeauthor{Li2015} studies whether various well-known auction and assignment mechanisms with attractive revenue or welfare properties for one side of the market
can be implemented in an obviously strategy-proof manner. Whether  one may implement stable matchings in an obviously strategy-proof manner remained an open problem.

For the purpose of this paper, we adopt the Gale and Shapley (1962) one-to-one matching market with men and women to represent two-sided matching markets; our main results naturally extend to many-to-one markets such as labor markets and school choice programs.
When women's preferences over men are perfectly aligned, the unique stable matching may be recovered via serial dictatorship, where men, in their ranked order, choose their partners. In this case, a sequential implementation of such serial dictatorship is obviously strategy-proof. (This follows from \cite{Li2015}, who shows that in a two-sided assignment market with agents and objects, serial dictatorship, when implemented sequentially, is obviously strategy-proof.\footnote{Since, after selecting an object, the agent quits the game, no contingent reasoning is needed in order to verify that she must ask for her favorite unallocated object. However, serial dictatorship (the same strategy-proof social choice rule), when implemented by having each agent simultaneously submit a ranking over all objects in advance, is not obviously strategy-proof. This example and the example in \cref{second-price} both demonstrate that whereas strategy-proofness is a property of the social choice rule,
obvious strategy-proofness is a property of the mechanism implementing the social choice rule.})
Generalizing to allow for weaker forms of alignment of women's preferences, we
show that if women's preferences are acyclical \citep{Ergin2002},\footnote{A preference profile for a woman over men is cyclical if there are
three men $a,b,c$ and two women $x,y$ such that $a \succ_x b \succ_x c \succ_y a$.} then the men-optimal stable matching can be implemented via an obviously strategy-proof mechanism. While the obvious truthfulness of the basic questions that we use to construct this implementation (questions of the form ``do you prefer $x$ the most out of all currently unmatched women?'') draws from the same intuition upon which the serial dictatorship mechanism is based, the questions are considerably more flexible, and the order of the questions  more subtle.

The main finding of this paper  is that for general preferences, no mechanism that implements the men-optimal stable matching (or any other stable matching) is obviously strategy-proof for men.
We first prove this impossibility  in a specifically crafted matching market with 3 women and 3 men, in which women have fixed (cyclical) commonly known preferences
and men have unrestricted private preferences. It is then shown that for the impossibility to hold in any market, it is sufficient for some $3$ women to have this structure of preferences over some $3$ men.
Moreover, the same result holds even if women's preferences are privately known.
An immediate implication of these results is  that in a large market, in which women's preferences are drawn independently and uniformly at random, with high probability no implementation of any stable mechanism  is obviously strategy-proof for all men (or even for most men).
These results  apply to school choice settings even when  schools are not strategic and  have commonly known priorities over students. For example, unless schools' priorities  over students are sufficiently aligned, no mechanism that is stable with respect to students' preferences and schools' priorities is obviously strategy-proof for students.

This paper sheds more light on fundamental
differences between two-sided market mechanisms that aim to implement a two-sided notion such as stability, and closely related two-sided  market mechanisms that aim to implement some efficiency notion for one of the sides of the market.
First, as noted, in assignment markets there exists an obviously strategy-proof ex-post efficient mechanism
(serial dictatorship). Second, a variety of ascending auctions, from familiar multi-item auctions
\citep{demange1986multi}  to recently proposed clock auctions
\citep{milgrom2014deferred}, maximize welfare or revenue and are obviously strategy-proof,
despite the latter's being based on  deferred acceptance principles. In contrast, this paper shows that
there is no way to achieve stability that is obviously strategy-proof for either side of the market.

Obvious strategy-proofness was introduced by \cite{Li2015}, who studies this property extensively  in mechanisms with monetary transfers. In settings without transfers, \cite{Li2015} studies this property in implementations of serial dictatorship and top trading cycles. Several papers further study this property in different settings. Closely related is \cite{troyan2016}, who studies  two-sided markets with agents and objects and asks for which priorities for objects one can implement in an obviously strategy-proof manner the Pareto-efficient top trading cycles algorithm. \cite{pt2016} characterize general obviously strategy-proof mechanisms without transfers under a ``richness'' assumption on the preferences domain, and characterize the sequential version of random serial dictatorship under such an assumption via a natural set of axioms that includes obvious strategy-proofness. \cite{bg2016} constructively characterize Pareto-efficient social choice rules that admit obviously strategy-proof implementations in popular domains (object assignment, single-peaked preferences, and combinatorial auctions).
It is worth noting that all three of these papers utilize machinery and observations that originated in this paper.

The paper is organized as follows. \cref{sec:model} provides the model and background, including the definition of obvious strategy-proofness in matching markets.
\cref{sec:special-cases} presents special  cases for which an obviously strategy-proof implementation of the men-optimal stable matching exists.
\cref{sec:impossibility} provides the main impossibility result. \cref{sec:two-sided} presents corollaries in a model where women also have private preferences. \cref{sec:conclusion} concludes.

\section{Preliminaries}
\label{sec:model}

\subsection{Two-sided matching with one strategic side}\label{one-sided}

For the bulk of our analysis it will  be sufficient to consider two-sided markets in which only one side of the market is strategic. We begin by defining the notions of matching and strategy-proofness in such markets.

In a two-sided matching market, the participants are partitioned into a finite set of \emph{men}~$M$ and a finite set of \emph{women}~$W$.
A \emph{preference list} (for some man $m$) over $W$ is a totally ordered subset of $W$ (if some woman $w$ does not appear on the preference list, we think of her as being unacceptable to $m$).
Denote the set of all preference lists over $W$ by $\prefs(W)$.
A \emph{preference profile} $\bar{p}=(p_m)_{m\in M}$ for $M$ over $W$ is a specification of a preference list $p_m$ over~$W$ for each man $m\in M$. (So the set of all preference profiles for $M$ over $W$ is $\prefs(W)^M$.)
Given a preference list $p_m$ for some man $m$,
we write $w\succ_m w'$ to denote that man~$m$ strictly prefers woman $w$ over woman $w'$, (i.e., either woman $w$ is ranked higher than~$w'$ on $m$'s preference list, or $w$ appears on this list while $w'$ does not),
and write $w\succeq_m w'$ if it is not the case that $w' \succ_m w$.

A \emph{matching} between $M$ and~$W$ is a one-to-one mapping between a subset of $M$ and a subset of $W$.
Denote the set of all matchings between $M$ and $W$ by $\matchings$. Given a matching~$\matching$ between $M$ and $W$, for a participant $a\in M\cup W$ we write $\matching_a$ to denote $a$'s match in  $\matching$, or write $\matching_a=a$ if $a$ is unmatched.

A (one-side-querying) \emph{matching rule} is a function $C:\prefs(W)^M\rightarrow\matchings$, from preference profiles for $M$ over $W$ to matchings between $M$ and $W$.

A matching rule $C$ is said to be \emph{strategy-proof} for a man $m$ if for every preference profile $\bar{p}=(p_m)_{m\in M}\in\prefs(W)^M$ and for every (alternate) preference list $p'_m\in \prefs(W)$, it is the case that $C_m(\bar{p}) \succeq_m C_m(p_m',\bar{p}_{-m})$ according to $p_m$.\footnote{As is customary, $(p_m',\bar{p}_{-m})$ denotes the preference profile obtained from $\bar{p}$ by setting the preference list of $m$ to be $p'_m$.}
$C$ is said to be \emph{strategy-proof} if it is strategy-proof for every man.

\subsection{Obvious strategy-proofness}\label{osp}

This section briefly describes the notion of obvious strategy-proofness, developed in great generality by \cite{Li2015}. We rephrase these notions for the special case of deterministic matching mechanisms with finite preference and outcome sets. For ease of presentation, attention is restricted to mechanisms under perfect information; however, the results in this
paper still hold (\emph{mutatis mutandis}) via the same proofs for the general definitions of \cite{Li2015}.\footnote{Readers who are familiar with the general definitions of \cite{Li2015} may easily verify that if a randomized stable obviously strategy-proof (OSP) mechanism exists, then derandomizing it by fixing in advance each choice of nature to some choice made with positive probability yields a deterministic stable OSP mechanism. Furthermore, if some stable mechanism is OSP under partial information, then it is also OSP under perfect information.}

Whereas strategy-proofness is a property of a given matching rule, obvious strategy-proofness is a property of a specific implementation, via a specific mechanism, of such a matching rule. A mechanism implements a matching rule by specifying, roughly speaking, an extensive-form game tree that implements the standard-form game associated (where strategies coincide with preference lists) with the matching rule, where each action at each node of the extensive-form game tree corresponds to some set of possible preference lists for the acting participant. We now formalize this definition.

\begin{definition}[matching mechanism]
A (one-side-querying extensive-form) \emph{matching mechanism} for $M$ over $W$ consists of:
\begin{enumerate}
\item
A rooted tree $T$. The nodes/vertices of the tree are denoted by $V(T)$. The edges of the tree are denoted by $E(T)$ and are directed away from the root: if an edge $e$ is incident with a node $n$ but is not on the path from the root of the tree to $n$, then $e$ is outgoing from $n$. The leaves (nodes with no outgoing edges) of the tree are denoted by $L(T)\subset V(T)$.
\item
A map $X:L(T)\rightarrow\matchings$ from the leaves of $T$ to matchings between $M$ and $W$.
\item
A map $Q:V(T)\setminus L(T)\rightarrow M$, from internal nodes of $T$ to $M$.
\item
A map $A:E(T)\rightarrow 2^{\prefs(W)}$, from edges of $T$ to predicates over $\prefs(W)$, such that all of the following hold:
\begin{itemize}
\item
Each such predicate must match at least one element in $\prefs(W)$.
\item
The predicates corresponding to edges outgoing from the same node are disjoint.
\item
The disjunction (i.e., set union) of all predicates corresponding to edges outgoing from a node $n$ equals the predicate corresponding to the last edge outgoing from a node labeled $Q(n)$ along the path from the root to $n$, or to the predicate matching all elements of $\prefs(W)$ if no such edge exists.
\end{itemize}
\end{enumerate}
\end{definition}

A preference profile $\bar{p}\in\prefs(W)^M$  is said to \emph{pass through} a node $n \in V(T)$ if, for each edge~$e$ along the path from the root of $T$ to $n$, it is the case that $p_{Q(n')}\in A(e)$, where $n'$ is the source node of $e$. That is, the nodes through which $\bar{p}$ passes are the nodes of the path that starts from the root of $T$ and follows, from each internal node $n'$ that it reaches, the unique outgoing edge whose predicate matches the preference list of $Q(n')$.

\begin{definition}[implemented matching rule]
Given an extensive-form matching mechanism~$\impl$, we denote by $C^{\impl}$, called the matching rule \emph{implemented by} $\impl$, the (one-side-querying) matching rule  mapping a preference profile $\bar{p}\in\prefs(W)^M$ to the matching $X(n)$, where $n$ is the unique leaf through which $\bar{p}$ passes. Equivalently, $n$ is the node in $T$ obtained by traversing $T$ from its root, and from each internal node $n'$ that is reached, following the unique outgoing edge whose predicate matches the preference list of $Q(n')$.
\end{definition}

Two preference lists $p,p'\in\prefs(W)$ are said to \emph{diverge} at a node $n\in V(T)$ if there exist two distinct edges~$e,e'$ outgoing from $n$ such that  $p\in A(e)$ and $p'\in A(e')$.

\begin{definition}[obvious strategy-proofness (OSP)]
Let $\impl$ be an extensive-form matching mechanism.
\begin{enumerate}
\item
$\impl$ is said to be \emph{obviously strategy-proof (OSP) for a man $m\in M$} if for every node $n$ with $Q(n)=m$ and for every
$\bar{p}=(p_{m'})_{m'\in M}\in\prefs(W)^M$ and $\bar{p}'=(p'_{m'})_{m'\in M}\in\prefs(W)^M$ that both pass through $n$ such that $p_m$ and $p'_m$ diverge at $n$, it is the case that $C^{\impl}_m(\bar{p}) \succeq_m C^{\impl}_m(\bar{p}')$ according to~$p_m$. In other words, the worst possible outcome for $m$ when acting truthfully (i.e., according to $p_m$) at $n$ is no worse than the best possible outcome for $m$ when misrepresenting his preference list to be $p'_m$ at $n$.
\item
$\impl$ is said to be \emph{obviously strategy-proof (OSP)} if it is obviously strategy-proof for every man $m\in M$.
\end{enumerate}
\end{definition}

\cite{Li2015} shows that obviously strategy-proof mechanisms are, in a precise sense, mechanisms that can shown to implement strategy-proof rules under a cognitively limited proof model that does not allow for contingent reasoning.
To observe how strategy-proofness of the matching rule $C^\impl$ for a man $m\in M$  is indeed a weaker condition than obvious strategy-proofness of the mechanism $\impl$ for $m$, note that the matching rule $C^{\impl}$ is strategy-proof for $m$ if and only if for every node $n$ with $Q(n)=m$ and for every $\bar{p}=(p_m)_{m\in M}\in\prefs(W)^M$ that passes through~$n$ and for every $p'_m\in\prefs(W)$ that diverges from $p_m$ at $n$,\footnote{These conditions imply that $(p'_m,\bar{p}_{-m})$ also passes through $n$.} it is the case that $C^{\impl}_m(\bar{p})\succeq_m C^{\impl}_m(p'_m,\bar{p}_{-m})$ according to $p_m$.\footnote{We emphasize that this rephrased definition is equivalent to the definition of strategy-proofness of the matching rule~$C^{\impl}$ that is given in \cref{one-sided}, however it is not equivalent to standard definition of strategy-proofness of the extensive-form game underlying the mechanism~$\impl$, which would allow each man to condition the type he is ``pretending to be'' under any strategy on the information revealed by other men in preceding nodes. Once we move to the realm of obvious strategy-proofness, the restriction on each strategy to always consistently ``pretend to be'' of the same type is inconsequential, as the definition of OSP considers the case in which other men may play different types when the man in question acts truthfully or deviates. It is for this reason that we have chosen to implicitly define a strategy in the extensive-form game underlying $\impl$ to be restricted to consistently ``pretending to be'' of the same type. This somewhat nonstandard implicit definition of a strategy considerably simplifies notation throughout this paper (by considering only consistent behavior on behalf of every agent) without changing the mathematical meaning of obvious strategy-proofness (or of strategy-proofness of a matching rule) and without limiting the generality of our results.}

\begin{definition}[OSP-implementability]
A (one-side-querying) matching rule $C:\prefs(W)^M\rightarrow\matchings$ is said to be \emph{OSP-implementable} if $C=C^{\impl}$ for some obviously strategy-proof matching mechanism~$\impl$. In this case, we say that \emph{$\impl$ OSP-implements $C$}.
\end{definition}

\subsection{Stability}

We proceed to describe a simplified version of  stability in matching markets as introduced by \cite{Gale-Shapley}.
While, as stated in \cref{one-sided}, for the bulk of our analysis it is sufficient to consider markets in which only men are strategic, to define the notion of stability one must consider not only preferences for the (strategic) men, but also preferences (sometimes called priorities) for the (nonstrategic) women. Women's preference lists and preference profiles are defined analogously with those of men.
We continue to denote a preference profile for men by $\bar{p}=(p_m)_{m\in M}\in\prefs(W)^M$, while denoting a preference profile for women by  $\bar{q}=(q_w)_{m\in M}\in\prefs(M)^W$.

Let $\bar{p}$  and $\bar{q}$ be preference profiles of men and women respectively. A matching $\matching$ is said to be \emph{unstable} with respect to  $\bar{p}$  and $\bar{q}$ if there exist a man $m$ and a woman $w$ each preferring the other over the partner matched to them by $\matching$, or if some participant $a\in M\cup W$ is matched with some other participant not on $a$'s preference list. A matching that is not unstable  is said to be \emph{stable}.
\cite{Gale-Shapley} showed that a stable matching exists with respect to every pair of preference profiles and, furthermore, that for every pair of preference profiles there exists an \emph{$M$-optimal stable matching}, i.e., a stable matching such that each man weakly prefers his match  in this stable matching over his match  in any other stable matching.

We now relate the concept of stability to the (one-side-querying) matching rules and mechanisms defined in the previous sections.
Let $\bar{q}\in\prefs(M)^W$ be a preference profile for $W$ over $M$.
A (one-side-querying) matching rule $C$ is said to be \emph{$\bar{q}$-stable} if for every preference profile $\bar{p}\in\prefs(W)^M$ for $M$ over $W$, the matching $C(\bar{p})$ is stable with respect to $\bar{p}$ and $\bar{q}$. A (one-side-querying) matching mechanism is said to be \emph{$\bar{q}$-stable} if the matching rule that it implements is $\bar{q}$-stable.

We denote by $C^{\bar{q}}:\prefs(W)^M\rightarrow\matchings$ the \emph{$M$-optimal stable matching rule}, i.e., the (one-side-querying, $\bar{q}$-stable) matching rule mapping each preference profile for men $\bar{p}$  to the $M$-optimal stable matching with respect to $\bar{p}$ and $\bar{q}$.
It is well known that $C^{\bar{q}}$
is strategy-proof for all men \citep{Dubins-Freedman}.
Moreover, no other matching rule is  strategy-proof for all men \citep{Gale-Sotomayor-1985}.\footnote{For a more general result, see \cite{chen2016manipulability}.}
In the notation of this paper:

\begin{theorem}[\citealp{Gale-Sotomayor-1985,chen2016manipulability}]\label{strategy-proof}
For every preference profile $\bar{q}\in\prefs(M)^W$ for $W$ over $M$, no $\bar{q}$-stable matching rule $C\ne C^{\bar{q}}$ is strategy-proof.
\end{theorem}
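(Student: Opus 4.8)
The plan is to prove the contrapositive statement: any $\bar q$-stable matching rule $C$ that is strategy-proof for all men must equal $C^{\bar q}$. So suppose $C$ is $\bar q$-stable and strategy-proof for every man, but $C \neq C^{\bar q}$, and fix a profile $\bar p = (p_m)_{m\in M} \in \prefs(W)^M$ with $C(\bar p) \neq C^{\bar q}(\bar p)$. Write $\matching = C^{\bar q}(\bar p)$ and $\nu = C(\bar p)$; both are stable with respect to $\bar p$ and $\bar q$. Since a matching is pinned down by the partner it assigns each man, $\matching \neq \nu$ yields some man $m$ with $\matching_m \neq \nu_m$; as $\matching$ and $\nu$ are stable, $\matching_m$ and $\nu_m$ are each either $m$ himself or a woman acceptable to $m$ under $p_m$, and $m$ strictly ranks all such options, so $M$-optimality of $\matching$ strengthens $\matching_m \succeq_m \nu_m$ to $\matching_m \succ_m \nu_m$. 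Put $w = \matching_m$, and note that $w$ is acceptable to $m$ and $m$ is acceptable to $w$ (again by stability of $\matching$). The goal is to exhibit a profitable deviation for $m$ at $\bar p$.

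The deviation is the truncated report $p'_m \in \prefs(W)$ whose only acceptable woman is $w$; set $\bar p' = (p'_m, \bar p_{-m})$. First I would verify that $\matching$ is still stable with respect to $\bar p'$ and $\bar q$: $\matching$ matches $m$ to $w$, which is acceptable under $p'_m$; $m$ cannot belong to a blocking pair, since $w$ is his only acceptable woman under $p'_m$; and any blocking pair avoiding $m$ would already block $\matching$ under $\bar p$. Hence the $M$-optimal stable matching of $(\bar p', \bar q)$ assigns $m$ a partner weakly $p'_m$-preferred to $w$, which must therefore be $w$ itself, so $m$ is matched there. Now I invoke the well-known invariance (the ``rural hospitals'' / McVitie--Wilson theorem) of the set of matched agents across all stable matchings of a fixed market: since $m$ is matched in one stable matching of $(\bar p', \bar q)$, he is matched in all of them, in particular in the $\bar q$-stable matching $C(\bar p')$; and being matched at all under $p'_m$ means $C_m(\bar p') = w$. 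Finally, evaluated at $m$'s true list $p_m$ we get $C_m(\bar p') = w = \matching_m \succ_m \nu_m = C_m(\bar p)$, contradicting strategy-proofness of $C$ for $m$ (which requires $C_m(\bar p) \succeq_m C_m(p'_m,\bar p_{-m})$ according to $p_m$). This contradiction finishes the proof.

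The step I expect to be the crux is the claim that, after $m$ truncates his report to the singleton $\{w\}$, \emph{every} $\bar q$-stable rule returns $w$ to $m$ --- equivalently, that $m$ cannot be left unmatched by $C(\bar p')$. I would derive this, as above, from two ingredients: that the original $M$-optimal matching $\matching$ survives the truncation as a stable matching (an easy check), and the invariance of the matched set across stable matchings. If one preferred not to cite the rural-hospitals theorem, one could argue directly: were $m$ unmatched under $C(\bar p')$, then $w$ would have to be matched under $C(\bar p')$ to some $m'' \neq m$, and comparing with $\matching$ (where $w$ is matched to $m$) one could trace the resulting preference/instability chain to a contradiction --- but this is fiddlier and less transparent, so I would take the former route.
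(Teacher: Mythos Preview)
The paper does not supply its own proof of this \lcnamecref{strategy-proof}; it is stated with attribution to \cite{Gale-Sotomayor-1985} and \cite{chen2016manipulability} and then invoked as a black box in the proofs of \cref{three-by-three} and \cref{not-osp-less-cyclical}. So there is no in-paper argument to compare your attempt against.

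That said, your proof is correct and is essentially the classical truncation argument from the cited literature: pick a man $m$ who strictly prefers his $M$-optimal partner $w=\matching_m$ to his partner under $C$, truncate his list to $\{w\}$, observe that the original $M$-optimal matching $\matching$ remains stable after the truncation, and invoke the lone-wolf/rural-hospitals invariance to conclude that $m$ is matched (hence to $w$) in \emph{every} stable matching of the truncated market, in particular in $C(\bar p')$. One small point worth making explicit: when you set $w=\matching_m$ you are implicitly using that $\matching_m\in W$ rather than $\matching_m=m$; this follows because if $m$ were unmatched in $\matching$ then $\matching_m\succ_m\nu_m$ would force $\nu_m$ to be an unacceptable woman, contradicting individual rationality of the stable matching $\nu$. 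With that observation added, the argument is complete.
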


In this paper, we ask whether $C^{\bar{q}}$ is not only strategy-proof, but also OSP-implementable. (As it is the unique strategy-proof $\bar{q}$-stable matching rule, it is the only candidate for OSP-implementability.)

\section{OSP-implementable special cases}
\label{sec:special-cases}

Before stating our main impossibility result, we first present a few special cases in which $C^{\bar{q}}$, the $M$-optimal stable matching rule for a fixed women's preference profile $\bar{q}$, is in fact OSP-implementable. These are the first known OSP mechanisms without transfers that are not dictatorial.\footnote{All OSP mechanisms that are surveyed in the end of the introduction are based upon the query structure of the mechanisms of this \lcnamecref{sec:special-cases}.}

For simplicity, we describe all of these cases under the assumption that the market is balanced (i.e., that $|W|=|M|$) and that all preference lists are full (i.e., that each participant prefers being matched to anyone over being unmatched); generalizing each of the below cases for unbalanced markets or for preference lists for men that are not full is straightforward.\footnote{Indeed, asking any man whether he prefers being unmatched over being matched with any (remaining not-yet-matched) woman never violates obvious strategy-proofness.}
The first case we consider is that in which women's preferences are perfectly aligned.

\begin{example}[$C^{\bar{q}}$ is OSP-implementable when women's preferences are perfectly aligned]\label{perfectly-aligned}
Let $q\in\prefs(M)$ and let $\bar{q}=(q)_{w\in W}$ be the preference profile for $W$ over $M$ in which all women share the same preference list $q$. $C^{\bar{q}}$ is OSP-implementable by the following serial dictatorship mechanism: ask the man most preferred according to $\bar{q}$ which woman he prefers most, and assign that woman to this man (in all leaves of the subtree corresponding to this response), ask the man second-most preferred according to $\bar{q}$ which woman he prefers most out of those not yet assigned to any man, and assign that woman to this man (in all leaves of the subtree corresponding to this response), etc. This mechanism can be shown to be OSP by the same reasoning that \cite{Li2015} uses to show that serial dictatorship is OSP.
\end{example}

Another noteworthy example is that of arbitrary preferences in a very small matching market.

\begin{example}[$C^{\bar{q}}$ is OSP-implementable when $|M|=|W|=2$]\label{two-by-two}
When $|M|=|W|=2$, $C^{\bar{q}}$ is OSP-implementable for every preference profile $\bar{q}\in\prefs(M)^W$ for $W$ over $M$. Indeed, let $M=\{a,b\}$ and $W=\{x,y\}$. If $q_x=q_y$, then $C^{\bar{q}}$ is OSP-implementable as explained in \cref{perfectly-aligned}. Otherwise, without loss of generality $a\succ_x b$ and $b\succ_y a$; for this case, \cref{two-by-two-tree} describes an OSP mechanism that implements $C^{\bar{q}}$.
\begin{figure}[ht]
\centering
\begin{tikzpicture}[node distance=3.5cm]
\node (q1) [internal] {$a$};
\node (q2) [internal, right of=q1] {$b$};
\node (a1) [leaf, below of=q1] {$a \Leftarrow x$ $b \Leftarrow y$};
\node (a2) [leaf, below of=q2] {$a \Leftarrow x$ $b \Leftarrow y$};
\node (a3) [leaf, right of=q2] {$a \Leftarrow y$ $b \Leftarrow x$};
\draw [arrow] (q1) -- node[anchor=west] {$x \succ_a y$} (a1);
\draw [arrow] (q1) -- node[anchor=north] {$y \succ_a x$} (q2);
\draw [arrow] (q2) -- node[anchor=west] {$y \succ_b x$} (a2);
\draw [arrow] (q2) -- node[anchor=north] {$x \succ_b y$} (a3);
\end{tikzpicture}
\caption{An OSP mechanism that implements $C^{\bar{q}}$ for $|W|=|M|=2$ and for $\bar{q}$ where $a\succ_x b$ and $b\succ_y a$. (The notation, e.g., $a\Leftarrow x$, indicates that $x$ is matched to $a$ in the matching corresponding to that leaf of the mechanism tree.)}\label{two-by-two-tree}
\end{figure}
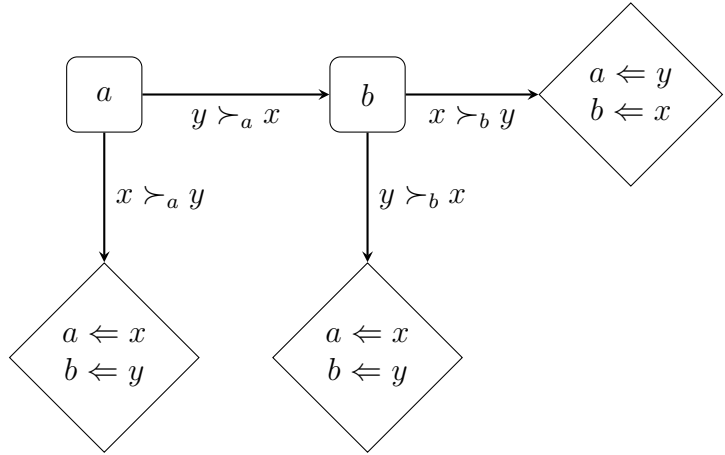
\end{example}

The preference profiles in \cref{perfectly-aligned,two-by-two} are special cases of the class of acyclical preference profiles, whose structure was defined by \cite{Ergin2002}.

\begin{definition}[acyclicality]\label{acyclicality}
A preference profile $\bar{q}\in\prefs(M)^W$ for $W$ over $M$ is said to be \emph{cyclical} if there exist $a,b,c\in M$ and $x,y\in W$ such that $a \succ_x b \succ_x c \succ_y a$.
If $\bar{q}$ is not cyclical, then it is said to be \emph{acyclical}.
\end{definition}

\cite{Ergin2002} shows that acyclicality of $\bar{q}$ is necessary and sufficient for $C^{\bar{q}}$ to be strongly group strategy-proof and Pareto efficient.
We now generalize \cref{perfectly-aligned,two-by-two} by showing that acyclicality of $\bar{q}$ (as in both of these \lcnamecrefs{perfectly-aligned}) is sufficient for $C^{\bar{q}}$ to be also OSP-implementable. Much like the implementations in \cref{perfectly-aligned,two-by-two}, the strategy-proofness of the OSP implementation that emerges for acyclical preferences is far easier to understand than that of the standard deferred-acceptance implementation, thus showcasing the usefulness of obvious strategy-proofness in identifying easy-to-understand implementations. In each mechanism step, either a single man is given free pick out of all remaining $w\in W$, or two men are each given first priority over some subset of $W$ (i.e., free pick if his favorite remaining $w\in W$ is there), and second priority over the rest (i.e., free pick out of all other remaining $w\in W$ except the one chosen by the other man if the latter invoked his first priority).

\begin{theorem}[positive result for acyclical preferences]\label{acyclical-osp}
$C^{\bar{q}}$ is OSP-implementable for every acyclical preference profile $\bar{q}\in\prefs(M)^W$ for $W$ over $M$.
\end{theorem}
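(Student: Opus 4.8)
The plan is to build an OSP mechanism for $C^{\bar{q}}$ directly by recursion, using acyclicality to guarantee that at every stage a ``safe'' move is available. The first ingredient is a structural lemma. Fix $M'\subseteq M$ and $W'\subseteq W$ with $|M'|=|W'|$ such that the restriction of $\bar{q}$ to $M'\times W'$ is acyclical, and let $A\subseteq M'$ be the set of men who are the top choice within $M'$ of some woman in $W'$. I claim that either (i) $|A|=1$, and then its single member is the top choice of \emph{every} woman in $W'$; or (ii) $|A|=2$, say $A=\{a,b\}$, and then, writing $W_a$ and $W_b$ for the (nonempty) sets of women topped by $a$ and by $b$, every $w\in W_a$ ranks $a$ first and $b$ second and every $w\in W_b$ ranks $b$ first and $a$ second, so that every other man is ranked third or lower by every woman in $W'$. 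Both parts follow by contradiction from the definition of a cycle: three distinct top men $a,b,c$ (of women $x_a,x_b,x_c$) give, after relabelling so that $b\succ_{x_a}c$, the cycle $a\succ_{x_a}b\succ_{x_a}c\succ_{x_c}a$; and a woman $x\in W_a$ whose second choice is some $c\notin\{a,b\}$, together with any $y\in W_b$, gives $a\succ_x c\succ_x b\succ_y a$.

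Using the lemma I would define the mechanism by recursion on $n=|M'|=|W'|$, carrying along any predicate already imposed on a man who has been queried before (new queries intersect their predicates with it, keeping the mechanism well-formed). In case (i), query the unique $m_0\in A$ for his favorite remaining woman $w$, commit $m_0\Leftarrow w$ throughout the resulting subtree, and recurse on $(M'\setminus\{m_0\},W'\setminus\{w\})$. In case (ii), query $a$, offering one action per woman $x\in W_a$ and one \emph{lumped} action ``my favorite lies in $W_b$'': on an action $x\in W_a$, commit $a\Leftarrow x$ and recurse (the residual market has $b$ as the unanimous top, i.e.\ is of type (i)); on the lumped action, query $b$, offering one action per $y\in W_b$ and one lumped action ``my favorite lies in $W_a$.'' On $b$'s action $y\in W_b$, commit $b\Leftarrow y$ and recurse (now $a$ is the unanimous top, so the recursion simply re-queries $a$ for his favorite among the survivors). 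On $b$'s lumped action, re-query $a$ for his favorite within $W_b$ and $b$ for his favorite within $W_a$, commit each to the named woman, and recurse on $(M'\setminus\{a,b\},W'\setminus\{x,y\})$. Since sub-profiles of acyclical profiles are acyclical and $n$ strictly decreases, the recursion is well-defined and terminates.

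Both required properties are then proved by induction on $n$. For ``implements $C^{\bar{q}}$'', every commitment except one reduces to the standard fact that if woman $w$ ranks man $m$ first among the remaining men while $m$ ranks $w$ first among the remaining women, then $(m,w)$ is in every stable matching of the residual market and removing it preserves $M$-optimality. The exception is the lumped/lumped commitment, where $a$'s reported favorite $y$ lies in $W_b$ and $b$'s reported favorite $x$ lies in $W_a$; here a brief deferred-acceptance argument shows that in the $M$-optimal stable matching neither $a$ nor $b$ is ever displaced from his favorite — the only possible displacer of $a$ from $y$ is $b$, the only possible displacer of $b$ from $x$ is $a$, and each displacement would presuppose the other — so $a\Leftarrow y$, $b\Leftarrow x$, and the remaining men run deferred acceptance among themselves, exactly as in the recursion. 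For OSP, every query either asks a man who is the unanimous top choice of all remaining women for his favorite among some fixed collection of women and then fixes his match to the woman he names (which is OSP for the standard serial-dictatorship reason: his assigned woman is exactly the one he names, so naming his true favorite is weakly dominant at that node), or is $a$'s first query in case (ii). In the latter, if $a$'s true favorite lies in $W_a$ it is his overall favorite, truthful play attains it, and no action can do better; if his true favorite $y_0$ lies in $W_b$, truthful (lumped) play attains $y_0$ unless $b$ reports $y_0$ himself — in which case $b$ takes $y_0$ whatever $a$ does and $a$ obtains his true second choice — so the worst truthful outcome is his second choice, while every divergent action names some $x'\in W_a$ and yields exactly $x'$, which $a$ weakly disprefers to his favorite within $W_a$ and hence to his second choice overall. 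Thus at every node the worst truthful outcome weakly dominates the best outcome under a divergent action.

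The crux — the place where acyclicality does work beyond the structural lemma — is this last OSP check, and specifically the choice to \emph{lump} all of $a$'s ``favorite-in-$W_b$'' types into a single action at his first query rather than let him name the woman. If $a$ named her, then after $b$ reported the same woman (so $b$ takes her) and $a$ were re-queried over the survivors, a man who had mis-reported a different woman of $W_b$ as his favorite could now name his true favorite and get her, beating his worst truthful outcome and breaking obvious strategy-proofness. Lumping forces every such mis-report to take the \emph{same} action as truthful play at $a$'s first query, so it never triggers the OSP constraint there, and by the time $a$ is re-queried he is a dictator over the survivors, where naming his true favorite is obviously optimal. Checking that this design goes through for every acyclical $\bar{q}$ — and that the residual markets produced are always exactly the ones the recursion expects — is the main work.
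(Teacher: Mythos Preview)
Your construction is essentially the paper's: the same structural lemma (at most two ``top'' men, and in the two-man case every woman ranks them first and second), and the same recursive mechanism --- your single multi-way query to $a$ with one action per $x\in W_a$ plus a lumped ``favorite lies in $W_b$'' action is exactly what the paper obtains by asking $a$ sequentially ``is $w$ your top?'' for each $w\in W_a$, and your lumped/lumped branch is the paper's final ``ask each of $a$ and $b$ for his top choice.''

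There is, however, a small gap in your OSP case analysis. You assert that every query either ``asks a man who is the unanimous top choice of all remaining women \ldots\ and then fixes his match to the woman he names'' or ``is $a$'s first query in case~(ii).'' But two queries fit neither description: $b$'s first query (after $a$ lumps) and $a$'s re-query in the lumped/lumped branch. In both, the other of $a,b$ is still unmatched, so the queried man is \emph{not} the unanimous top of all remaining women. Both are easy to handle --- at $b$'s first query, truthful play always yields $b$ his overall favorite (if it lies in $W_b$ he names it and is committed; if it lies in $W_a$ he lumps and is later re-queried as dictator over $W_a$, which contains it), and at $a$'s re-query the predicate already forces $a$'s favorite into $W_b$, so naming it is trivially optimal --- but your two-case dichotomy does not cover them, and you should add these cases explicitly. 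The paper avoids this bookkeeping by phrasing every question uniformly as ``is $w$ your top choice among the remaining women?'' and arguing once that a truthful ``no'' never forfeits the option of later choosing $w$.
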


\begin{proof}
We prove the result by induction over $|M|=|W|$. By acyclicality, at most two men are ranked by some woman as her top choice. If only one such man $m\in M$ exists, then he is ranked by all women as their top choice---in this case, similarly to \cref{perfectly-aligned}, we ask this man for his top choice $w \in W$, assign her to him, and then continue by induction (finding in an OSP manner the $M$-optimal stable matching between $M\setminus\{m\}$ and $W\setminus\{w\}$). Otherwise, there are precisely two men $a\in M$ and $b\in M$ who are ranked by some woman as her top choice. By acyclicality, each woman either has $a$ as her top choice and $b$ as her second-best choice, or \emph{vice versa}.\footnote{This is reminiscent of the priorities of the first two agents in bipolar serially dictatorial rules \citep{BDE05}, which are indeed included in the analysis of \cref{acyclical-osp} as a special case.} We conclude somewhat similarly to \cref{two-by-two-tree}: for each woman $w\in W$ that prefers $a$ most, we ask $a$ whether he prefers $w$ most; if so, we assign $w$ to~$a$ and continue by induction. Otherwise, for each woman $w\in W$ that prefers $b$ most, we ask $b$ whether he prefers $w$ most; if so, we assign $w$ to $b$ and continue by induction. Otherwise, we ask each of $a$ and $b$ for his top choice, assign each of them his top choice, and continue by induction.

To see that this implementation is OSP, consider a man $m\in M$ who is asked by this mechanism whether a woman $w\in W$ is his top choice (among the remaining women). If $m$ really does prefer $w$ most, then answering truthfully matches him to $w$, which he weakly prefers over any outcome that occurs if he is not truthful. Similarly, if $m$ does not prefer~$w$ most, then answering truthfully may get $m$ a more preferred choice, but also assures $m$ that if he does not get such a preferred choice, then he would still be able to choose to get matched to $w$ (he would do so if he fails to get his top choice, and $w$ is his second-best); so, any outcome that results from truthfulness is weakly preferred by $m$ over any outcome that results from nontruthfulness in this case as well.
\end{proof}

We conclude this section by noting, however, that acyclicality of $\bar{q}$ is not a necessary condition for OSP-implementability of $C^{\bar{q}}$, as demonstrated by the following example.

\begin{example}[OSP-implementable $C^{\bar{q}}$ with cyclical $\bar{q}$]\label{osp-more-cyclical}
Let $M=\{a,b,c\}$ and $W=\{x,y,z\}$. We claim that $C^{\bar{q}}$, for the following cyclical preference profile $\bar{q}$ for $W$ over $M$ (where each woman prefers being matched to any man over being unmatched), is OSP-implementable:
\[\begin{array}{ccccc}
a & \succ_x & b & \succ_x&  c \\
a & \succ_y & c & \succ_y & b \\
b & \succ_z & a & \succ_z & c.
\end{array}\]
We begin by noting that $\bar{q}$ is indeed cyclical, as $a \succ_y c \succ_y b \succ_z a$. We now note that the following mechanism OSP-implements $C^{\bar{q}}$:
\begin{enumerate}
\item
Ask $a$ whether he prefers $x$ the most; if so, assign $x$ to $a$ and continue as in \cref{two-by-two} (finding in an OSP manner the $M$-optimal stable matching between $\{y,z\}$ and $\{b,c\}$).
\item
Ask $a$ whether he prefers $y$ the most; if so, assign $y$ to $a$ and continue as in \cref{two-by-two}. (Otherwise, we deduce that\ \ 1) $a$ prefers $z$ the most and therefore\ \ 2) $c$ will not end up being matched to $z$.)
\item
Ask $b$ whether he prefers $z$ the most; if so, assign $z$ to $b$ and continue as in \cref{two-by-two}.
\item
Ask $b$ whether he prefers $x$ the most; if so, assign $x$ to $b$, $z$ to $a$, and $y$ to $c$. (Otherwise, we deduce that $b$ prefers $y$ the most.)
\item
Ask $c$ whether he prefers $x$ over $y$. If so, assign $x$ to $c$, $y$ to $b$, and $z$ to $a$. (Otherwise, we deduce that $b$ will not end up being matched to $y$.)
\item
Ask $b$ whether he prefers $z$ over $x$. Assign $b$ to his preferred choice between $z$ and $x$ and continue as in \cref{two-by-two}.
\end{enumerate}
\end{example}

Nonetheless, as we show in the next section, when there are more than $2$ participants on each side and women's preferences are sufficiently unaligned, $C^{\bar{q}}$ is not OSP-implementable.

\section{Impossibility result for general preferences}
\label{sec:impossibility}

We now present our main impossibility result.

\begin{theorem}[impossibility result for general preferences]\label{not-osp}
If $|M|\ge3$ and $|W|\ge3$, then there exists a preference profile $\bar{q}\in\prefs(M)^W$ for $W$ over $M$, such that
no $\bar{q}$-stable (one-side-querying) matching rule is OSP-implementable.
\end{theorem}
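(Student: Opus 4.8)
The plan is to exhibit an explicit market with $|M|=|W|=3$ and a single cyclical women's profile $\bar q$, and to show that \emph{every} $\bar q$-stable matching rule (equivalently, by \cref{strategy-proof}, only $C^{\bar q}$ need be considered, since any OSP-implementable rule is in particular strategy-proof) fails to be OSP-implementable. Concretely, I would take $M=\{a,b,c\}$, $W=\{x,y,z\}$ and choose the women's preferences so that there is a ``Condorcet-style'' cycle among the three men over three women, e.g.\ $a\succ_x b\succ_x c$, $b\succ_y c\succ_y a$, $c\succ_z a\succ_z b$ (or a minor variant thereof chosen to make the stable-matching bookkeeping cleanest). The first reduction is to pass to $C^{\bar q}$ and to tabulate, for each of the relevant men's profiles, what the $M$-optimal stable matching is; the symmetry of this cyclic instance should make that tractable.

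Next I would set up the combinatorial core of the OSP argument. Suppose for contradiction that $\impl$ is an OSP mechanism with $C^{\impl}=C^{\bar q}$. Look at the root of the tree; by symmetry assume $Q(\text{root})=a$. The key OSP constraint is: whenever a man $m$ acts at a node $n$ and two of his possible preference lists $p_m,p_m'$ diverge there, the \emph{worst} outcome for $m$ under $p_m$ (over all completions by the other men consistent with passing through $n$) must be weakly better, according to $p_m$, than the \emph{best} outcome for $m$ under $p_m'$. I would use this to argue that the first man to move cannot ``commit'' to anything substantive: for the cyclic instance, whatever bundle of preference lists he separates off at his first action, there is a continuation in the ``stay'' branch that gives him his worst stable partner and a continuation in the ``leave'' branch that gives someone an outcome he strictly prefers — because the stable matching is not a monotone/single-crossing function of his report in the way a dictatorship would be. The heart of the matter is that stability forces $a$'s match to depend on the \emph{interaction} of $b$'s and $c$'s reports (that is exactly what cyclicality buys us), so $a$ cannot be given a menu, yet any first mover in an OSP mechanism effectively must be offered a menu.

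I would organize this as follows: (i) reduce to $C^{\bar q}$ and fix the instance; (ii) a ``first-mover lemma'' — in any OSP mechanism for this instance, consider the man $m_0$ acting at the root, enumerate the (at most three) outgoing edges, and for each candidate partition of $\prefs(W)$ into the edges' predicates derive a violation of the OSP inequality by producing two explicit profiles (one on each side of some edge) witnessing worst-under-truth $\prec_{m_0}$ best-under-deviation; (iii) observe that the argument in (ii) only used properties shared by every node at which $m_0$ could conceivably be the first to "learn something," so in fact \emph{no} man can be queried first — contradiction, since the root must query someone. The main obstacle I anticipate is step (ii): making the case analysis over the possible first-action partitions genuinely exhaustive while keeping it short. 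The natural way to control it is to note that an OSP first action for $m_0$ must, for at least one outgoing edge $e$, have the property that $m_0$'s outcome is constant over all profiles passing through $e$ (otherwise $m_0$'s worst outcome on the "wrong" edge competes against his best outcome on the "right" edge, and one shows these straddle in the cyclic instance); then I would show that no single woman $w$ can be that constant outcome of any edge, by exhibiting, for each $w$, a profile that is stable-consistent with the edge's predicate but in which $C^{\bar q}$ does not match $m_0$ to $w$ — again using the cycle. Closing that last gap cleanly is the crux; everything else is bookkeeping about the six stable matchings of the chosen $3\times 3$ instance.
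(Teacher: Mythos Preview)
Your overall architecture matches the paper's: the same cyclic $3\times3$ instance, the reduction to $C^{\bar q}$ via \cref{strategy-proof}, and the focus on the first node where the queried man's action actually branches. Where you diverge is in how you close the case analysis, and that divergence is exactly where your proposal becomes shaky.

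The paper avoids your step~(ii) entirely by \emph{pruning}: before analyzing the tree, it restricts each man's preference domain to just two carefully chosen lists (e.g., $p^1_a=z\succ y\succ x$ and $p^2_a=y\succ x\succ z$, and cyclically for $b,c$). By Li's pruning proposition, an OSP mechanism stays OSP after such a restriction. Now the first branching node has exactly two outgoing edges, one per remaining list, and by the three-fold symmetry it suffices to treat the case $Q(n)=a$. A single explicit pair of profiles then witnesses the OSP violation---no enumeration of partitions of $\prefs(W)$ is needed. This is the missing idea that dissolves what you correctly identify as ``the crux.''

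Two further points. First, your intermediate claim---that some outgoing edge must carry a \emph{constant} outcome for $m_0$---does not follow from the OSP inequalities and is not how the argument goes; even in the pruned mechanism the outcome along each edge varies with the other men's types, and the contradiction comes from exhibiting one profile on each side, not from constancy. Second, your proposal only treats $|M|=|W|=3$, whereas the theorem asserts existence of a bad $\bar q$ whenever $|M|\ge3$ and $|W|\ge3$. The paper handles this by a second pruning: fix three men and three women, give the women the cyclic preferences over those men (arbitrary otherwise), and prune all other men to empty preference lists and $a,b,c$ to lists ranking $x,y,z$ on top; this reduces the larger market to the $3\times3$ lemma.
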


Observe that \cref{not-osp} applies to any $\bar{q}$-stable (one-side-querying) matching rule, and not only to the $M$-optimal stable matching rule $C^{\bar{q}}$.
Before proving the result, we first prove a special case that cleanly demonstrates the construction underlying our proof.

\begin{lemma}\label{three-by-three}
For $|M|=|W|=3$, there exists a preference profile $\bar{q}\in\prefs(M)^W$ for $W$ over $M$ such that
no $\bar{q}$-stable (one-side-querying) matching rule is OSP-implementable.
\end{lemma}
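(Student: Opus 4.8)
The plan has two parts: reduce the statement to the non-OSP-implementability of a single rule, and then analyze the query structure of a hypothetical OSP mechanism for it.

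\emph{Reduction and choice of $\bar{q}$.} Obvious strategy-proofness of a mechanism implies strategy-proofness of the rule it implements, so any OSP-implementable $\bar{q}$-stable rule is a strategy-proof $\bar{q}$-stable rule and hence equals $C^{\bar{q}}$ by \cref{strategy-proof}. It therefore suffices to exhibit a cyclical $\bar{q}$ for which $C^{\bar{q}}$ is not OSP-implementable. I would take $M=\{a,b,c\}$, $W=\{x,y,z\}$, and the cyclically symmetric profile
\[ x:\ a\succ b\succ c, \qquad y:\ b\succ c\succ a, \qquad z:\ c\succ a\succ b, \]
which is cyclical ($a\succ_x b\succ_x c\succ_z a$), is invariant under the rotation $a\mapsto b\mapsto c\mapsto a$, $x\mapsto y\mapsto z\mapsto x$ (so, after relabeling, I may later assume the root queries $a$), and in which each man $m\in\{a,b,c\}$ has a ``home woman'' $w_m\in\{x,y,z\}$ respectively who ranks $m$ first.

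\emph{A structural claim about any OSP implementation.} Suppose $\impl$ OSP-implements $C^{\bar{q}}$, and let $n$ be any node querying a man $m$. The key observation is that for \emph{every} list $p_m\in\prefs(W)$, the \emph{worst} value of $C^{\bar{q}}_m$ over all profiles passing through $n$ is exactly $w_m$: since $w_m$ ranks $m$ first, $m$ can be crowded down to $w_m$ but never below her, and if $m$ ranks $w_m$ first he simply gets her. On the other hand, the \emph{best} value of $C^{\bar{q}}_m$ over such profiles equals $w_m$ if $p_m$ ranks $w_m$ first, and is some strictly-better-than-$w_m$ woman otherwise. The OSP inequality at $n$ — the worst outcome of truthful play is $\succeq_{p_m}$ the best outcome of any list diverging from $p_m$ at $n$ — then forces, in particular, that the list $\ell$ in which $w_m$ is ranked \emph{last} cannot be separated at $n$ from any other list not ranking $w_m$ first: its worst outcome $w_m$ is strictly worse, under $\ell$, than that other list's best outcome. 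Consequently any node querying $m$ reveals at most whether $m$ ranks $w_m$ first.

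\emph{Peeling and contradiction.} By symmetry assume the root queries $a$; follow the ``$a$ does not rank $x$ first'' edge, then (contracting the now-necessarily-trivial further queries to $a$) the next queried man is $b$ or $c$; follow his ``does not rank his home woman first'' edge, then the last man's. This reaches a node $n^{\ast}$ through which pass exactly the profiles in which $a$ does not rank $x$ first, $b$ does not rank $y$ first, and $c$ does not rank $z$ first. Applying the structural claim again at $n^{\ast}$, no man can act non-trivially there, so $n^{\ast}$ is a leaf and $C^{\bar{q}}$ is constant on this whole set of profiles. This is false: one checks, with two profiles differing only in $a$'s list — e.g. $(y\succ x\succ z)$ versus $(z\succ x\succ y)$, with $b$ and $c$ held at suitable non-home-first lists — that $C^{\bar{q}}$ matches $a$ to different women. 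This contradiction proves the lemma.

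\emph{Main obstacle.} The crux is the structural claim, and specifically the need to verify that the claimed worst outcome $w_m$ remains attainable for $m$ even after the other men's type sets have been restricted by the peeling. This amounts to exhibiting, for each man $m$ and each of his six lists, an explicit profile of the other two men that respects the accumulated ``not home first'' restrictions and crowds $m$ down to $w_m$ (together with a one-line argument that nothing worse than $w_m$ can occur), and a similar explicit check that the best-outcome values are as claimed. The remaining bookkeeping — trivial or repeated queries, and the case in which the root queries $b$ or $c$ rather than $a$ — is routine via the cyclic symmetry.
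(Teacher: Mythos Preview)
Your proposal is essentially correct and reaches the same conclusion via the same preference profile $\bar{q}$ and the same reduction to $C^{\bar{q}}$, but your core argument is genuinely different from the paper's. The paper prunes each man's type space to just \emph{two} carefully chosen lists, looks at the first branching node of the pruned tree, and exhibits a single explicit OSP violation there (one truthful profile and one deviation profile). Your argument instead develops a structural constraint on any OSP implementation --- that at each node along a certain path, all ``non-home-first'' lists of the queried man must stay together --- and then derives a contradiction from $C^{\bar{q}}$ being non-constant on the full product of non-home-first profiles. Your route yields more structural insight (it essentially classifies what an OSP mechanism could ever ask), at the cost of the case-checking you flag in your ``Main obstacle''; the paper's pruning route is considerably shorter and avoids almost all of that bookkeeping.

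Two imprecisions to tighten. First, the structural claim as you state it (``at \emph{any} node querying $m$ the worst is exactly $w_m$'') is too strong; it need only hold at nodes along the path you actually trace, where the other men's type sets still contain all non-home-first lists --- which is exactly what your ``Main obstacle'' verifications establish. Second, the edge you follow at each step is the one containing all non-$w_m$-first lists, but it may also contain some $w_m$-first lists, so the node $n^\ast$ you reach need not have \emph{exactly} the non-home-first profiles passing through it, and further queries there need not be literally trivial. The clean fix is either to continue following non-home-first edges all the way to a leaf (through which all non-home-first profiles then pass, giving the same contradiction), or --- closer to the paper's technique --- to prune each man to his four non-$w_m$-first full lists at the outset, after which your structural claim makes every query trivial and the pruned mechanism collapses to a single leaf.
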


\begin{proof}
Let $M=\{a,b,c\}$ and $W=\{x,y,z\}$. Let $\bar{q}$ be the following preference profile (where each woman prefers being matched to any man over being unmatched):
\begin{equation}\label{tricyclical}
\begin{array}{ccccc}
a & \succ_x & b & \succ_x&  c \\
b & \succ_y & c & \succ_y & a \\
c & \succ_z & a & \succ_z & b.
\end{array}
\end{equation}
Assume for contradiction that an OSP mechanism $\impl$ that implements a $\bar{q}$-stable matching rule $C^{\impl}$ exists. Therefore, $C^{\impl}$ is strategy-proof, and so, by \cref{strategy-proof}, $C^{\impl}=C^{\bar{q}}$.
In order to reach a contradiction by showing that such a mechanism (that OSP-implements $C^{\bar{q}}$) cannot possibly exist, we dramatically restrict the domain of preferences of all men, which results in a simpler mechanism, where the contradiction can be identified in a less cumbersome manner. We define:

\begin{center}
\setlength\tabcolsep{2em}
\begin{tabular}{ccc}
$p^1_a \eqdef z \succ y \succ x$ & $p^1_b \eqdef x \succ z \succ y$ & $p^1_c \eqdef y \succ x \succ z$ \\
$p^2_a \eqdef y \succ x \succ z$ & $p^2_b \eqdef z \succ y \succ x$ & $p^2_c \eqdef x \succ z \succ y$,
\end{tabular}
\end{center}
and set $\prefs_a\eqdef\{p^1_a,p^2_a\}$, $\prefs_b\eqdef\{p^1_b,p^2_b\}$, and $\prefs_c\eqdef\{p^1_c,p^2_c\}$.

Following the ``pruning'' technique in \cite{Li2015}, we note that if we ``prune'' the tree of $\impl$ by replacing, for each edge $e$, the predicate $A(e)$ with the conjunction (i.e., set intersection) of $A(e)$ with the predicate matching all elements of $\prefs_{Q(n)}$, where $n$ is the source node of~$e$, and by consequently deleting all edges $e$ for which $A(e)=\bot$,\footnote{The standard notation $\bot$ stands for ``false'' (mnemonic: an upside-down ``true'' $\top$), i.e., the predicate that matches nothing, so an edge for which $A(e)=\bot$ will never be followed.} we obtain, in a precise sense, a mechanism that implements $C^{\bar{q}}$ where the preference list of each man $m\in M$ is \emph{a priori} restricted to be in $\prefs_m$.\footnote{The definition of mechanisms and OSP when the domain of preferences is restricted extends naturally from that given in \cref{osp} for unrestricted preferences.
The interested reader is referred to Appendix \ref{restricted-domain-mechanisms} for precise details.}
By a proposition in \cite{Li2015}, since the original mechanism $\impl$ is OSP, so is the pruned mechanism as well.

Let $n$ be the earliest (i.e., closest to the root) node in the pruned tree that has more than one outgoing edge (such a node clearly exists, since $C^{\impl}=C^{\bar{q}}$ is not constant over $\prefs_a\times\prefs_b\times\prefs_c$). By symmetry of $\bar{q},\prefs_a,\prefs_b,\prefs_c$, without loss of generality $Q(n)=a$. By definition of pruning, it must be the case that $n$ has two outgoing edges, one labeled $p^1_a$, and the other labeled $p^2_a$. We claim that the mechanism of the pruned tree is in fact not OSP. Indeed, for $p_a=p^2_a$ (the ``true preferences''), $p_b=p^2_b$, and $p_c=p^1_c$, we have that $C_a^{\impl}(\bar{p})=C_a^{\bar{q}}(\bar{p})=x$, yet for $p'_a=p^1_a$ (a ``possible manipulation''), $p'_b=p^1_b$, and $p'_c=p^2_c$, we have that $C_a^{\impl}(\bar{p}')=C_a^{\bar{q}}(\bar{p}')=y$, even though $C_a^{\impl}(\bar{p}')=y\succ_a x = C_a^{\impl}(\bar{p})$ according to $p_a$ (by definition of $n$, both $\bar{p}$ and $\bar{p}'$ pass through $n$, and $p_a$ and $p_a'$ diverge at $n$), and so the mechanism of the pruned tree indeed is not OSP --- a contradiction.
\end{proof}

\begin{proof}[Proof of \cref{not-osp}]
The \lcnamecref{not-osp} follows from a reduction to \cref{three-by-three}. Indeed, let $a,b,c$ be three distinct men and let $x,y,z$ be three distinct women. Let $\bar{q}\in\prefs(W)^M$ be a preference profile such that the preferences of $x,y,z$ satisfy \cref{tricyclical} with respect to $a,b,c$ (with arbitrary preferences over all other men), and with arbitrary preferences for all other women. Assume for contradiction that a $\bar{q}$-stable OSP mechanism $\impl$ exists.

We prune (see the proof of \cref{three-by-three} for an explanation of pruning) the tree of $\impl$ such that the only possible preference lists for $a,b,c$ are those in which they prefer each of $x,y,z,$ over all other women, and the only possible preference list for all other men is empty.\footnote{Alternatively, one could set for all other men arbitrary preference lists that do not contain $x,y,z$.} Let $\bar{q}'$ be the preference profile given in \cref{three-by-three}; the resulting (pruned) mechanism is a $\bar{q}'$-stable matching mechanism for $a,b,c$ over $x,y,z$,\footnote{Formally, it is a matching mechanism for $W$ over $M$ with respect to the pruned preferences, but can be shown to always leave all participants but $a,b,c$ and $x,y,z,$ unmatched, and so can be thought of as a matching mechanism for $a,b,c$ over $x,y,z$.} and so, by \cref{three-by-three}, it is not OSP; therefore, by the same proposition in \cite{Li2015} that is used in \cref{three-by-three}, neither is $\impl$.
\end{proof}

As \cref{not-osp} shows, it is enough that \emph{some three women} have preferences that satisfy \cref{tricyclical} with respect to \emph{some three men} in order for obvious strategy-proofness to be unattainable. This implies that obvious strategy-proofness in also unattainable in large random markets with high probability.
\begin{corollary}[impossibility result for random markets]\label{not-osp-whp}
If $|M|\ge3$ and $|W|\ge3$, then as $|M|+|W|$ grows, we have for a randomly drawn preference profile $\bar{q}\sim U\bigl(\prefs(M)^W\bigr)$ for $W$ over $M$ that:\footnote{This result also holds, with the same proof, if $\bar{q}$ is drawn uniformly at random from the set of all full preferences (i.e., where each woman prefers being matched to any man over being unmatched).}
\begin{parts}
\item
With high probability no $\bar{q}$-stable (one-side-querying) matching rule is OSP-imple\-mentable.
\item
For every three distinct men $a,b,c\in M$, as $|W|$ grows, with high probability no $\bar{q}$-stable (one-side-querying) matching mechanism is OSP for $a$, $b$, and $c$.
\item\label{not-osp-whp-poly}
If $|M|\le\poly\bigl(|W|\bigr)$, then with high probability no $\bar{q}$-stable (one-side-querying) matching mechanism is OSP for more than two men.
\end{parts}
\end{corollary}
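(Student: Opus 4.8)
The plan is to deduce all three parts from \cref{not-osp}, or rather from its proof, whose only ingredient is what I will call a \emph{tricyclical configuration}: three distinct women whose preference lists, restricted to some three distinct men, realize --- in some order --- the three cyclic rotations of a single cyclic order on those men (i.e.\ the pattern of \cref{tricyclical}, up to relabeling the three men and the three women). Reading the proof of \cref{not-osp}, whenever a triple of men $\{a,b,c\}$ participates in such a configuration, every $\bar q$-stable matching mechanism fails obvious strategy-proofness, and moreover the violation is witnessed at a node queried by one of $a,b,c$ (after pruning the men's domains to the six lists used in the proof of \cref{three-by-three}); hence no $\bar q$-stable mechanism is OSP for the set $\{a,b,c\}$, and in particular no $\bar q$-stable matching rule is OSP-implementable. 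So each part reduces to a probabilistic statement about the random $\bar q$: that the configuration is present, respectively, for \emph{some} triple of men (part~a), for a \emph{prescribed} triple of men (part~b), and for \emph{every} triple of men (part~c). The one probabilistic fact I would use throughout is that, for any fixed triple of men, a single woman's list restricted to that triple is uniform over the six orders conditional on all three of those men appearing on her list --- which has probability $1$ in the full-preferences model and probability $1-O(1/|M|)$, bounded away from $0$, under $\bar q\sim U\bigl(\prefs(M)^W\bigr)$ --- and that distinct women are mutually independent; hence there is a constant $c_0>0$ with $\Pr[\text{a given woman restricts to a prescribed order of a prescribed triple}]\ge c_0$.

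For part~(b), I would fix the triple $\{a,b,c\}$ and one of its two rotation sets (three orders); by independence across women and a union bound over the three orders, the probability that this rotation set is not fully realized among the $|W|$ women is at most $3(1-c_0)^{|W|}$, so with high probability it is realized, the tricyclical configuration is present, and the proof of \cref{not-osp} finishes. For part~(a) when $|W|$ is large, I would simply fix an arbitrary triple of men (using $|M|\ge3$) and reuse this bound. When instead $|M|$ is large, I would fix three women $w_1,w_2,w_3$ and condition on the supports $S_1,S_2,S_3$ of their lists; on $S\eqdef S_1\cap S_2\cap S_3$ the three women induce independent uniform orders (a uniform order on $S_i$ restricted to $S\subseteq S_i$ is uniform on $S$, and given the supports the three lists' orders are independent), and $|S|\ge|M|-O(1)\to\infty$ with high probability; partitioning $S$ into $\lfloor|S|/3\rfloor$ disjoint triples of men, the induced orders on disjoint blocks are independent, so each block is a rotation configuration with probability at least some constant $p_0>0$, independently across blocks, whence the probability that no block is a rotation configuration is at most $(1-p_0)^{\lfloor|S|/3\rfloor}\to0$. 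Since $\Pr[\text{no triple of men carries the configuration}]$ is bounded both by $3(1-c_0)^{|W|}$ and by the expectation of $(1-p_0)^{\lfloor|S|/3\rfloor}$, it is at most the geometric mean of these, which tends to $0$ whenever $|M|+|W|\to\infty$ (equivalently, one simply case-splits on which of $|M|,|W|$ is large); this gives part~(a).

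For part~(c), the part-(b) estimate already gives, for each fixed triple, $\Pr[\text{this triple lacks the configuration}]\le3(1-c_0)^{|W|}$; a union bound over all $\binom{|M|}{3}\le|M|^3$ triples then gives $\Pr[\text{some triple lacks the configuration}]\le3|M|^3(1-c_0)^{|W|}$, and under $|M|\le\poly\bigl(|W|\bigr)$ this is $\poly\bigl(|W|\bigr)\cdot(1-c_0)^{|W|}\to0$, so with high probability every triple of men carries the configuration. Any set of three or more men contains such a triple, for which no $\bar q$-stable mechanism is OSP by \cref{not-osp}; hence with high probability no $\bar q$-stable matching mechanism is OSP for more than two men. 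The full-preferences variant of all parts is verbatim, with $c_0=1/6$ and no conditioning on supports.

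I expect the main obstacle to be the large-$|M|$ regime of part~(a). Parts (b) and (c) exploit only independence \emph{across women}, but here one needs many \emph{independent} opportunities to observe the rotation pattern among different triples of men drawn from the \emph{same} three women; since under $U\bigl(\prefs(M)^W\bigr)$ the list supports are correlated, the clean independence of induced orders on disjoint blocks is available only after conditioning on those supports, which is the one place the argument is not a bare union bound. Everything else --- the reduction to \cref{not-osp}, and parts (b) and (c) --- is routine.
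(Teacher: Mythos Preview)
Your proposal is correct and follows the same overall strategy as the paper: both reduce all three parts to the appearance of the tricyclical pattern of \cref{tricyclical} for the relevant triples of men, and both invoke the exponential decay (in $|W|$) of the probability that a fixed triple of men lacks this pattern, together with a union bound over triples for part~(c). Where your argument goes beyond the paper's sketch is the large-$|M|$, bounded-$|W|$ regime of part~(a): the paper's proof paragraph spells out only the part-(c) calculation and leaves parts~(a) and~(b) implicit, whereas you correctly observe that when $|W|$ does not grow one cannot simply reuse the exponential-in-$|W|$ bound, and instead you fix three women, condition on the intersection $S$ of their supports, and exploit independence of the induced relative orders on disjoint triples of $S$ to get exponentially many independent trials in $|M|$. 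This extra step is sound (a uniform permutation restricted to disjoint blocks yields independent uniform relative orders, and under $U\bigl(\prefs(M)^W\bigr)$ the missing elements from each list are $O(1)$ with high probability), and it fills in a case the paper's sketch does not explicitly address.
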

\cref{not-osp-whp} follows from an argument similar  to the one in the proof of \cref{not-osp}. Indeed, our proof of \cref{not-osp} in fact shows that if $\bar{q}$ satisfies \cref{tricyclical} with respect to three men $a,b,c$ and three women $x,y,z$, then no $\bar{q}$-stable matching mechanism is OSP for $a$, $b$, and~$c$. For \cref{not-osp-whp-poly}, for instance, we note that for a fixed triplet of distinct men $a,b,c\in M$, the probability that \cref{tricyclical} is not satisfied by $\bar{q}$ with respect to $a,b,c$ and any three women $x,y,z$ decreases exponentially with $|W|$, while the number of triplets of men increases polynomially with $|M|$.

We conclude this \lcnamecref{sec:impossibility} by noting that while the aesthetic preference profile defined in \cref{tricyclical} is sufficient for proving \cref{not-osp} and even \cref{not-osp-whp}, it is by no means the unique preference profile that eludes an obviously strategy-proof implementation, even when $|M|=|W|=3$. Indeed, \cref{not-osp-less-cyclical} in \cref{app:not-osp-less-cyclical} gives an additional example of such a preference profile, which could be described as ``less cyclical,'' in some sense.\footnote{While the proof of \cref{not-osp-less-cyclical} also follows a pruning argument, the reasoning is more involved than in the proof given for \cref{three-by-three} above.}
In this context, it is worth noting that following up on our paper, \cite{troyan2016} gives a necessary and sufficient condition, ``weak acyclicality'' (weaker, indeed, than acyclicality as defined in \cref{acyclicality}), on the preferences of objects in the (Pareto efficient, not necessarily stable) top trading cycles algorithm for this algorithm to be OSP-implementable for the agents. The example given in \cref{not-osp-less-cyclical} also demonstrates that \citeauthor{troyan2016}'s condition does not suffice for the existence of an OSP-implementable stable mechanism.
A comparison of the respective preference profiles used for the positive result of \cref{osp-more-cyclical} and the negative result of \cref{not-osp-less-cyclical},
noting that the former is obtained by taking the latter and arguably making it ``more aligned'' by modifying the preference list of woman $x$ to equal that of woman $y$,
suggests that an analogous succinct ``maximal domain'' characterization of preference profiles that admit OSP-implementable stable mechanisms may be delicate, and obtaining it may be challenging.

\section{Matching with two strategic sides}\label{sec:two-sided}

So far, this paper has studied two-sided matching markets in which only men are strategic and  women's preference lists  are commonly known. This allowed us to ask questions such as, for which preference profiles of women
one can OSP-implement the  $M$-optimal stable matching rule? This setting is furthermore practically relevant in school choice where, for example,   schools do not act strategically but have priorities over students.

Our analysis, however,  also  immediately yields that  when both men and women behave strategically, no stable matching mechanism is OSP-implementable. To formalize this result, we introduce a few definitions.
A \emph{two-sides-querying matching rule} is a function $C:\prefs(W)^M\times\prefs(M)^W\rightarrow\matchings$, from preference profiles for both men and women to a matching between $M$ and $W$.
A two-sides-querying matching rule $C$ is \emph{stable} if for any preference profiles $\bar{p}$ and $\bar{q}$ for men and women,  $C(\bar{p},\bar{q})$ is stable with respect to $\bar{p}$ and $\bar{q}$. A two-sides-querying matching mechanism\footnote{The definition of mechanisms and OSP for markets where both sides are strategic extends naturally from that given in \cref{osp} for markets where only one side is strategic.
The interested reader is referred to Appendix \ref{two-sided-mechanisms} for precise details.}
is \emph{stable} if the two-sides-querying matching rule that it implements is stable.
\cref{not-osp} implies the following impossibility result for two-sides-querying matching mechanisms:

\begin{corollary}[impossibility result for two-sides-querying mechanisms]\label{not-osp-two-sides}
If $|M|\ge3$ and $|W|\ge3$, then no stable two-sides-querying matching rule is OSP-implementable for $M$. Moreover, no stable two-sides-querying matching mechanism is OSP for more than two men.
\end{corollary}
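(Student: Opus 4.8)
The plan is to reduce \cref{not-osp-two-sides} to \cref{not-osp} by ``freezing'' the women's side. Suppose for contradiction that a stable two-sides-querying matching mechanism~$\impl$ exists that is OSP for men $a$, $b$, and $c$ (for the first assertion, let $a,b,c$ be arbitrary and assume $\impl$ is OSP for all of $M$). Since $|M|\ge3$ and $|W|\ge3$, we may fix three distinct women $x,y,z$ and pick a preference profile $\bar{q}\in\prefs(M)^W$ whose restriction to $x,y,z$ over $a,b,c$ is exactly \cref{tricyclical} (completed arbitrarily for the remaining women and men); recall that, by the strengthening of the proof of \cref{not-osp} noted immediately after its statement, no $\bar{q}$-stable (one-side-querying) matching mechanism is OSP for $a$, $b$, and~$c$. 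The goal is to extract from $\impl$ precisely such a mechanism.

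To do so I would prune the tree of $\impl$ exactly as in the proofs of \cref{three-by-three,not-osp}, but applied to the women's side: for every edge $e$ outgoing from a woman-querying node $n$, replace $A(e)$ by its intersection with the singleton predicate matching $q_{Q(n)}$, and delete every edge whose predicate becomes $\bot$. Because the predicates on the edges out of $n$ are disjoint and their disjunction matches the relevant set of preference lists of $Q(n)$, exactly one survives, and it is then the singleton predicate matching $q_{Q(n)}$; hence every woman-querying node is left with a single outgoing edge. Contracting each such single-child node (identifying it with its child) yields a rooted tree all of whose internal nodes are men-querying, i.e.\ a one-side-querying matching mechanism $\impl'$ for $M$ over $W$. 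By construction $\impl'$ implements the restriction of $C^{\impl}$ to preference profiles whose women's component equals $\bar{q}$; since $C^{\impl}$ is stable, $\impl'$ is $\bar{q}$-stable. Moreover, by the same proposition of \cite{Li2015} used in the proof of \cref{three-by-three}, pruning preserves OSP for every man (the OSP condition for a man $m$ is quantified over nodes $n$ with $Q(n)=m$ and over pairs of passing profiles, and pruning only shrinks both), so $\impl'$ is OSP for $a$, $b$, and~$c$ — and for all of $M$ in the first assertion. This contradicts the choice of $\bar{q}$, proving both parts.

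The one genuinely new ingredient, relative to the one-side-querying development of the paper, is the claim that pruning the women's domain down to the single profile $\bar{q}$ and contracting the resulting chains of single-child nodes turns a two-sides-querying mechanism into a bona fide one-side-querying matching mechanism: this is where the formal definitions for two-strategic-sides mechanisms (referred to in \cref{sec:two-sided}) are needed, and re-verifying the disjointness and disjunction conditions on the edge predicates after contraction is the \textbf{main obstacle} — though it is entirely routine, since contraction merely composes a node with its unique successor and the predicate of the contracted edge is inherited verbatim. Everything else is an immediate appeal to \cref{not-osp} together with the OSP-preservation of pruning from \cite{Li2015}.
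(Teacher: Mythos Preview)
Your argument is correct and is essentially the paper's own proof: the paper packages the reduction you describe as \cref{one-vs-two} (prune the women's side to a fixed profile~$\bar{q}$, obtaining a $\bar{q}$-stable one-side-querying mechanism that remains OSP for the same set of men), and then invokes \cref{not-osp} together with the observation---stated after \cref{not-osp-whp}, not immediately after \cref{not-osp} as you write---that the proof of \cref{not-osp} in fact rules out OSP for any three men $a,b,c$ whenever $\bar{q}$ satisfies \cref{tricyclical} with respect to them. Your explicit treatment of contracting the single-child woman nodes is slightly more detailed than the paper's one-line appeal to pruning, but the content is identical.
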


As with \cref{not-osp}, we note that \cref{not-osp-two-sides} applies to any stable two-sides-querying matching rule, and not only to the \emph{$M$-optimal two-side-querying stable matching rule} (i.e., the two-sides-querying matching rule that maps each pair of preference profiles to the corresponding $M$-optimal stable matching).
Similarly, \cref{acyclical-osp} implies the following possibility result for two-sides-querying matching mechanisms:

\begin{corollary}[positive result for $|M|=2$ for two-sides-querying mechanisms]\label{osp-two-sides}
If $|M|=2$, then the two-sides-querying $M$-optimal stable matching rule is OSP-implementable (by first querying the women, and then, given their preferences, continuing as in \cref{acyclical-osp}).
\end{corollary}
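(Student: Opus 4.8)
The plan is to build an explicit two-sides-querying mechanism and verify it has the two required properties. The mechanism runs in two phases. In the first phase it queries the women one at a time, in an arbitrary fixed order, each time eliciting the full preference list over $M$ of the woman being queried; since $|M|=2$ there are only finitely many preference lists over $M$, so each such query is a node with finitely many outgoing edges, one matching each list. After this phase the root-to-current-node path encodes a complete women's preference profile $\bar{q}\in\prefs(M)^W$. In the second phase, we use the observation that when $|M|=2$ there are no three distinct men, so every $\bar{q}$ is \emph{vacuously} acyclical in the sense of \cref{acyclicality}; hence \cref{acyclical-osp} supplies an OSP one-side-querying matching mechanism $\impl_{\bar{q}}$ with $C^{\impl_{\bar{q}}}=C^{\bar{q}}$, and we graft a copy of $\impl_{\bar{q}}$ onto the first-phase leaf reached under the reports $\bar{q}$. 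A routine check (using that each woman is queried exactly once, before any man, so the union/disjointness conditions on edge predicates are inherited) confirms this is a valid two-sides-querying matching mechanism; call it $\impl$.

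Two things then need to be shown. First, $\impl$ implements the two-sides-querying $M$-optimal stable matching rule: on input $(\bar{p},\bar{q})$ the women report $\bar{q}$ in the first phase and then the grafted copy $\impl_{\bar{q}}$ outputs $C^{\bar{q}}(\bar{p})$, which is by definition the $M$-optimal stable matching with respect to $\bar{p}$ and $\bar{q}$. Second, $\impl$ is OSP for every man $m\in M$. Fix a node $n$ with $Q(n)=m$. The path from the root to $n$ first traverses all of the first-phase women's nodes, and each edge there commits one woman to a fixed report; hence $n$ lies inside the copy $\impl_{\bar{q}^{*}}$ grafted for a single profile $\bar{q}^{*}$, and any preference profile $(\bar{p},\bar{q})$ passing through $n$ must have $\bar{q}=\bar{q}^{*}$. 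Consequently the OSP-for-$m$ requirement at $n$ --- that the worst outcome for $m$ over all profiles through $n$ at which $m$ acts truthfully is, according to $m$'s true list, no worse than the best outcome over all profiles through $n$ at which $m$ diverges --- is word-for-word the OSP-for-$m$ requirement at $n$ for the sub-mechanism $\impl_{\bar{q}^{*}}$ (recall $C^{\impl_{\bar{q}^{*}}}=C^{\bar{q}^{*}}$), and that holds because $\impl_{\bar{q}^{*}}$ is OSP by \cref{acyclical-osp}. Since this holds at every node queried by a man, $\impl$ is OSP for every man.

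The construction is essentially forced, so there is no deep obstacle; the two points that actually do the work are (i) noticing that $|M|=2$ makes acyclicality vacuous, so \cref{acyclical-osp} is available for \emph{every} women's profile and the second-phase sub-mechanisms can simply be glued together, and (ii) the observation that by the time any man is asked anything the entire women's profile is already committed, which is precisely what collapses OSP of the composite mechanism to OSP of each $\impl_{\bar{q}}$ separately. It should also be stressed that OSP here can only be asked of the men: the $M$-optimal stable matching rule fails strategy-proofness for the women already when $|M|=|W|=2$ (in men-proposing deferred acceptance a woman can gain by truncating her reported list), so no mechanism implementing it is OSP for the women --- which is why the first phase simply elicits the women's preferences without any incentive claim, and why this is the natural positive counterpart to the impossibility for men in \cref{not-osp-two-sides}.
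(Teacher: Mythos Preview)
Your proof is correct and follows essentially the same approach as the paper: the paper factors the gluing construction (query all women first, then graft the appropriate one-side-querying OSP mechanism at each interim leaf) into \cref{one-vs-two}, and then obtains \cref{osp-two-sides} by combining the $\Leftarrow$ direction of that lemma with the observation that $|M|=2$ makes every $\bar{q}$ acyclical so that \cref{acyclical-osp} applies. You carry out the same construction and verification inline rather than via the lemma, but the content is identical.
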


\noindent
A precise argument that relates the results for markets with one strategic side and those for markets with two strategic sides is given  in \cref{App-two-sided-proof}.

\section{Discussion}
\label{sec:conclusion}

This paper finds that no stable matching mechanism is obviously strategy-proof for the participants even on one of the sides of the market. This suggests that there may not be any alternative  way to describe the deferred acceptance procedure that makes its strategy-proofness more apparent, implying that strategic mistakes observed in practice \citep{chen-sonmez2006,rees2016suboptimal,HRS2016,SS2017} may not be avoidable by better explaining the mechanism.  This highlights the importance of gaining the trust of the agents who participate in stable mechanisms,  so that they both act as advised (even when it is hard to verify that no strategic opportunities exist) and are assured that the social planner will not deviate from the prescribed  procedure after preferences are elicited.

For the case in which women's preferences are acyclical, we describe an OSP mechanism that implements the men-optimal stable matching. As may be expected, the strategy-proofness of this OSP implementation is easier to understand than that of deferred acceptance. It is interesting to compare and contrast this mechanism with OSP mechanisms for auctions. In binary allocation problems, such as private-value auctions with unit demand, procurement auctions with unit supply, and binary public good problems, \cite{Li2015} shows that in every OSP mechanism, each buyer chooses, roughly speaking, between a fixed option (i.e., quitting) and a ``moving'' option that is \emph{worsening} over time (i.e., its price is increasing). In contrast, in the OSP mechanism that we construct for the men-optimal stable matching with acyclical women's preferences, each man $m$  either is assigned his (current) top choice or chooses between a fixed option (i.e., being unmatched) and a ``moving'' option that is \emph{improving} over time: choosing any  woman who prefers $m$ most among all yet-to-be-matched men. This novel construction has come to be utilized by various OSP implementations, such as all of those that are surveyed in the end of the introduction.

Bridging the negative and positive results via an exact, succinct characterization of how aligned the preference profile of the proposed-to side needs to be in order to support an obviously strategy-proof implementation remains an open question. A comparison of the respective preference profiles used for the positive result of \cref{osp-more-cyclical} and the negative result of \cref{not-osp-less-cyclical} (in \cref{app:not-osp-less-cyclical}) suggests that such a succinct ``maximal domain'' characterization may be delicate, and obtaining it may be challenging.\footnote{While a technical challenge, we find it unlikely that resolving this problem will yield interesting economic insights.}

Interestingly, while deferred acceptance is weakly group strategy-proof and has an ascending flavor similar to that of ascending unit-demand auctions or clock auctions (which are all obviously strategy-proof), deferred acceptance is in fact not OSP-implementable.
It seems that the fact that stability is a two-sided objective (concerning the preferences of agents on both sides of the market), in contrast with maximizing efficiency or welfare for one side, increases the difficulty of employing strategic reasoning over stable mechanisms. In this context, it is worth noting a line of work \citep{Segal07,GNOR15} that highlights a similar message in terms of complexity rather than strategic reasoning, by showing that the communication complexity (measured in the number of messages) of finding, or even verifying, an approximately stable matching is significantly higher than the communication complexity of approximate welfare maximization for one of the sides of the market \citep{DNO14}.
Indeed, in more than one way, stability is not an ``obvious" objective.

While direct-revelation stable mechanisms are ubiquitous, there is  growing usage of  sequential-like implementations of deferred acceptance (or close variations thereof).\footnote{These include college admissions in Brazil \citep{bo2016iterative},  Inner Mongolia \citep{chen2015time,gong2016dynamic}, and Tunisia \citep{luflade2017value}, and school choice in Wake  County \citep{dur2018identifying}. These implementations  differ in various dimensions including  the type of information provided to students, the timing, and how students can revise their choices; such differences may very well impact the students' behavior and therefore the outcome.}
Our results imply that none of these variants, however presented to students and however conducted, can be OSP. Moreover,
when DA is implemented sequentially according to its traditional description, sincere behavior is no longer even a dominant strategy but only induces an ex-post equilibrium \citep{bo2016iterative}.\footnote{\citet{bo2016iterative} require that only rejected agents  may  revise their proposals at each step in order to eliminate possible manipulations that appeared in the mechanism for college admissions in Brazil.} Nonetheless, seemingly contrasting these theoretical results, experimental evidence shows that such a sequential implementation of DA leads more often to sincere behavior and stable outcomes than the static implementation \citep{bo2016iterative2,pais2017static}. While sequential-like implementations do~not possess stronger incentive properties than static implementations (and sometimes even possess weaker incentive properties), sequential-like implementations do ease the cognitive tasks of participants in various ways: they simplify strategic interactions by allowing students to break-down their decisions into smaller decisions that each requires somewhat less contingent reasoning than in the static implementation (as it is taken after receiving more information and feedback, such as the updated cutoff at each college); they allow students to focus on their next choice rather than to dwell on tentative choices that may never be reached; and they reduce the necessary preference communication and preference learning due to the information that is released throughout the mechanism \citep{bo2016iterative,ashlagi2017communication}.
Our findings formally demonstrate the cognitive complexity of reasoning in stable mechanisms; this suggests a possible explanation as to why, within the context of such mechanisms, the benefits from reducing cognitive load that are offered by sequential implementations outweigh the negative effects of the slightly weaker incentive properties of these implementations.\footnote{The impact on students' welfare (which is of major importance) is beyond the scope of this paper, but see, for example, \citet{luflade2017value,dur2018identifying}.} In a sense, in the absence of an OSP mechanism to reduce the cognitive load while strengthening the incentive properties, the ``next best thing'' may well be an extensive-form mechanism that eases cognitive load in different manners than OSP mechanisms, and moreover gives students a ``feeling'' similar to that of OSP mechanisms by not relinquishing control to the mechanism and by being able to constantly witness that the mechanism is run as promised.\footnote{For a recent definition of a very strong sense of witnessing that the mechanism is run as promised, see \cite{akbarpour2017credible}.}

\bibliographystyle{abbrvnat}
\bibliography{matching-not-obvious}

\appendix

\section{Mechanisms with restricted domains}\label{restricted-domain-mechanisms}

In this \lcnamecref{restricted-domain-mechanisms}, we explicitly adapt  the definitions in \cref{osp} to a restricted domain of preferences, as used in the proof of \cref{three-by-three}. The differences from the definitions in \cref{osp} are marked with an \uline{underscore}. We emphasize that these definitions, like those in \cref{osp}, are also a special case of the definitions in \cite{Li2015}. For every $m\in M$, fix a subset $\prefs_m\subseteq\prefs(W)$. Furthermore, define $\prefs\eqdef\bigtimes_{m\in M}\prefs_m$.

\begin{definition}[matching mechanism]
A (one-side-querying extensive-form) \emph{matching mechanism} for $M$ over $W$ \uline{with respect to $\prefs$} consists of:
\begin{enumerate}
\item
A rooted tree $T$.
\item
A map $X:L(T)\rightarrow\matchings(M,W)$ from the leaves of $T$ to matchings between $M$ and $W$.
\item
A map $Q:V(T)\setminus L(T)\rightarrow M$, from internal nodes of $T$ to $M$.
\item
A map $A:E(T)\rightarrow 2^{\prefs(W)}$, from edges of $T$ to predicates over $\prefs(W)$, such that all of the following hold:
\begin{itemize}
\item
Each such predicate must match at least one element in $\prefs(W)$.
\item
The predicates corresponding to edges outgoing from the same node are disjoint.
\item
The disjunction (i.e., set union) of all predicates corresponding to edges outgoing from a node $n$ equals the predicate corresponding to the last edge outgoing from a node labeled $Q(n)$ along the path from the root to $n$, or to the predicate matching all elements of $\uline{\prefs_{Q(n)}}$ if no such edge exists.\footnote{In particular, this implies that the predicates corresponding to edges outgoing from a node $n$ are predicates over $\uline{\prefs_{Q(n)}}$.}
\end{itemize}
\end{enumerate}
\end{definition}

A preference profile $\bar{p}\in\uline{\prefs}$ is said to \emph{pass through} a node $n \in V(T)$ if, for each edge~$e$ along the path from the root of $T$ to $n$, it is the case that $p_{Q(n')}\in A(e)$, where $n'$ is the source node of $e$. That is, the nodes through which $\bar{p}$ passes are the nodes of the path that starts from the root of $T$ and follows, from each internal node $n'$ that it reaches, the unique outgoing edge whose predicate matches the preference list of $Q(n')$.

\begin{definition}[implemented matching rule]
Given an extensive-form matching mechanism~$\impl$ \uline{with respect to $\prefs$}, we denote by $C^{\impl}$, called the matching rule \emph{implemented by} $\impl$, the (one-side-querying) matching rule mapping a preference profile $\bar{p}\in\uline{\prefs}$ to the matching $X(n)$, where $n$ is the unique leaf through which $\bar{p}$ passes. Equivalently, $n$ is the node in $T$ obtained by traversing $T$ from its root, and from each internal node $n'$ that is reached, following the unique outgoing edge whose predicate matches the preference list of $Q(n')$.
\end{definition}

Two preference lists $p,p'\in\prefs(W)$ are said to \emph{diverge} at a node $n\in V(T)$ if there exist two distinct edges~$e,e'$ outgoing from $n$ such that  $p\in A(e)$ and $p'\in A(e')$.\footnote{In particular, this implies that $p,p'\in\uline{\prefs_{Q(n)}}$.}

\begin{definition}[obvious strategy-proofness (OSP)]
Let $\impl$ be an extensive-form matching mechanism \uline{with respect to $\prefs$} .
\begin{enumerate}
\item
$\impl$ is said to be \emph{obviously strategy-proof (OSP) for a man $m\in M$} if for every node $n$ with $Q(n)=m$ and for every
$\bar{p}=(p_{m'})_{m'\in M}\in\uline{\prefs}$ and $\bar{p}'=(p'_{m'})_{m'\in M}\in\uline{\prefs}$ that both pass through $n$ such that $p_m$ and $p'_m$ diverge at $n$, it is the case that $C^{\impl}_m(\bar{p}) \succeq_m C^{\impl}_m(\bar{p}')$ according to~$p_m$. In other words, the worst possible outcome for $m$ when acting truthfully (i.e., according to $p_m$) at $n$ is no worse than the best possible outcome for $m$ when misrepresenting his preference list to be  $p'_m$ at $n$.
\item
$\impl$ is said to be \emph{obviously strategy-proof (OSP)} if it is obviously strategy-proof for every man $m\in M$.
\end{enumerate}
\end{definition}

\section{A \texorpdfstring{``}{"}less cyclical\texorpdfstring{''}{"} non-OSP-implementable example}\label{app:not-osp-less-cyclical}

In this \lcnamecref{app:not-osp-less-cyclical}, we give an additional example of a preference profile $\bar{q}\in\prefs(M)^W$,
for three women over three men, for which no $\bar{q}$-stable matching rule is OSP-implementable.
This preference profile could be described, in some sense, as ``less cyclical'' than the one used above to drive the proof of the results of \cref{sec:impossibility}.
(Indeed, as noted above, this non-OSP-implementable preference profile is obtained by taking the OSP-implementable preference profile from \cref{osp-more-cyclical} and arguably making it ``more aligned'' by modifying the preference list of woman $x$ to equal that of woman $y$.)
While, similarly to the proof of \cref{three-by-three}, we show the impossibility of OSP-implementation of this example via a pruning argument, the reasoning in this argument is more involved than in the one in the proof given for \cref{three-by-three} in \cref{sec:impossibility}.

\begin{proposition}\label{not-osp-less-cyclical}
For $|M|=|W|=3$, no OSP mechanism implements a $\bar{q}$-stable (one-side-querying) matching rule, for the following preference profile $\bar{q}\in\prefs(M)^W$ for $M$ over $W$ (where each woman prefers being matched to any man over being unmatched):
\[
\begin{array}{ccccc}
a & \succ_x & c & \succ_x&  b \\
a & \succ_y & c & \succ_y & b \\
b & \succ_z & a & \succ_z & c.
\end{array}
\]
\end{proposition}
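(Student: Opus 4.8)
The plan is to mimic the pruning argument used in the proof of \cref{three-by-three}, but to be more careful about which restricted preference domains to use, since the preference profile $\bar{q}$ here is ``less cyclical'' and does not enjoy the clean $3$-fold cyclic symmetry that made the earlier argument so short. First, as before, assume for contradiction that an OSP mechanism $\impl$ implements a $\bar{q}$-stable matching rule $C^{\impl}$; since $C^{\impl}$ must be strategy-proof, \cref{strategy-proof} forces $C^{\impl}=C^{\bar{q}}$. I would then compute $C^{\bar{q}}$ explicitly on the relevant preference profiles: note that $x$ and $y$ have identical preference lists $a\succ c\succ b$, so in effect $a$ and $c$ behave like serial dictators over $\{x,y\}$ (subject to $z$'s competing claim), while $b$ is everyone's last or second-to-last choice. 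This bookkeeping is routine but is the source of all the case analysis.

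Next, I would choose a small restricted domain $\prefs_m\subseteq\prefs(W)$ for each man $m$ — probably $|\prefs_m|=2$ or $3$ — engineered so that (i) the induced map $C^{\bar{q}}$ restricted to $\prefs_a\times\prefs_b\times\prefs_c$ is non-constant (so the pruned tree has a branching node), and (ii) at whichever man's node the earliest branching occurs, one can exhibit two profiles $\bar{p},\bar{p}'$ in the restricted domain passing through that node, with the acting man's lists diverging there, such that his truthful outcome is strictly worse (according to his true list) than his deviation outcome. Because there is no longer a symmetry argument pinning down which man acts at the earliest branching node, I expect to need to handle each of the three possibilities $Q(n)\in\{a,b,c\}$ separately, and for each I would pre-select the pair of ``true'' and ``manipulated'' companion profiles that yields the OSP violation. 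Invoking the same proposition of \cite{Li2015} (pruning preserves OSP) then yields the contradiction.

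The main obstacle — and the reason the footnote flags this proof as ``more involved'' — is step (ii) in the case where the earliest branching node is queried to man $b$. Since $b$ is ranked near the bottom by every woman, $b$'s outcomes under the relevant profiles are tightly constrained, so finding a restricted domain in which $b$'s node genuinely branches \emph{and} admits an OSP violation requires choosing the companion profiles for $a$ and $c$ with some care; a naive choice of $\prefs_m$'s will make $b$'s node non-branching or make truthfulness weakly optimal for $b$. I would resolve this by first fixing the domains $\prefs_a,\prefs_c$ so that $a$'s and $c$'s behavior is ``already determined'' by the time control could reach a $b$-node along certain paths, thereby forcing any branching reachable only after $a,c$ have acted to reveal a contingent-reasoning failure; the analogous (easier) arguments for $Q(n)=a$ and $Q(n)=c$ then follow the template of \cref{three-by-three} almost verbatim. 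Finally, the reduction-style remark that this impossibility, like \cref{not-osp}, needs only \emph{three} women and \emph{three} men embedded in a larger market carries over unchanged, so no separate argument is needed for $|M|,|W|>3$.
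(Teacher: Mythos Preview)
Your plan is essentially the paper's proof: assume an OSP $\bar{q}$-stable mechanism, invoke \cref{strategy-proof} to get $C^{\impl}=C^{\bar{q}}$, prune to small domains $\prefs_a,\prefs_b,\prefs_c$, take the earliest branching node $n$, and split into cases on $Q(n)\in\{a,b,c\}$. The paper does exactly this, and your intuition that the $a$ and $c$ cases go through as in \cref{three-by-three} while the $b$ case is the delicate one is correct.

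The one point where your description drifts from what actually works is your proposed resolution of the $Q(n)=b$ case. You write that you would ``fix the domains $\prefs_a,\prefs_c$ so that $a$'s and $c$'s behavior is already determined by the time control could reach a $b$-node along certain paths.'' But $n$ is by definition the \emph{earliest} branching node, so if $Q(n)=b$ then no prior $a$- or $c$-node has branched at all, and every element of $\prefs_a\times\prefs_c$ is still live at $n$; nothing about $a$ or $c$ has been determined. The paper's device is different: it takes $|\prefs_b|=3$ (while $|\prefs_a|=|\prefs_c|=2$), so that at the branching node $n$ with $Q(n)=b$ at least one outgoing edge is labeled by a \emph{singleton} $\{p^i_b\}$, and then runs three subcases on $i\in\{1,2,3\}$, in each subcase exhibiting the OSP violation between the isolated type $p^i_b$ and a suitable $p^j_b$ on the other side. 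Your remark that ``probably $|\prefs_m|=2$ or $3$'' is exactly right, and the reason $3$ is needed for $b$ is precisely to generate this singleton-edge subcase structure; with $|\prefs_b|=2$ one cannot simultaneously handle all three of $Q(n)\in\{a,b,c\}$.
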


\begin{proof}
The proof starts similarly to that of \cref{three-by-three}.
Let $M=\{a,b,c\}$ and $W=\{x,y,z\}$. Let $\bar{q}$ be the above preference profile, and assume for contradiction that an OSP mechanism~$\impl$ that implements a $\bar{q}$-stable matching rule $C^{\impl}$ exists. Therefore, $C^{\impl}$ is strategy-proof, and so, by \cref{strategy-proof}, $C^{\impl}=C^{\bar{q}}$.
In order to reach a contradiction we dramatically restrict the domain of preferences of all men, however in this proof to a slightly richer domain than in the proof of \cref{three-by-three}. We define:

\begin{center}
\setlength\tabcolsep{2em}
\begin{tabular}{ccc}
$p^1_a \eqdef z \succ x \succ y$ & $p^1_b \eqdef y \succ z \succ x$ & $p^1_c \eqdef x \succ y \succ z$ \\
$p^2_a \eqdef z \succ y \succ x$ & $p^2_b \eqdef x \succ z \succ y$ & $p^2_c \eqdef y \succ x \succ z$, \\
& $p^3_b \eqdef x \succ y \succ z $ &
\end{tabular}
\end{center}
and set $\prefs_a\eqdef\{p^1_a,p^2_a\}$, $\prefs_b\eqdef\{p^1_b,p^2_b,p^3_b\}$, and $\prefs_c\eqdef\{p^1_c,p^2_c\}$.

Following a proof technique in \cite{Li2015}, we prune (see the proof of \cref{three-by-three} for more details) the tree of $\impl$ according to $\prefs_a,\prefs_b,\prefs_c$, to obtain a mechanism that implements~$C^{\bar{q}}$ where the preference list of each man $m\in M$ is \emph{a priori} restricted to be in $\prefs_m$.
By a proposition in \cite{Li2015}, since the original mechanism $\impl$ is OSP, so is the pruned mechanism as well.

Let $n$ be the earliest (i.e., closest to the root) node in the pruned tree that has more than one outgoing edge (such a node clearly exists, since $C^{\impl}=C^{\bar{q}}$ is not constant over $\prefs_a\times\prefs_b\times\prefs_c$). While the lack of symmetry of $\bar{q}$ does requires a slightly longer argument compared to the proof of \cref{three-by-three} to complete this proof (reasoning by cases according to~$Q(n)$ below), what makes the reasoning in this argument more involved (see the reasoning in the case $Q(n)=b$ below) than in its counterpart in the proof of \cref{three-by-three} is the fact that we have left possible three preference lists for man $b$.\footnote{To our knowledge, the first instance of an impossibility-by-pruning proof with more than two possible preference lists/types for any of the agents is in an impossibility result for OSP-implementation of combinatorial auctions in \cite{bg2016}. While that paper is much newer than any other result in our paper, the first draft of that proof predated the proof given in this \lcnamecref{app:not-osp-less-cyclical}.} We conclude the proof by reasoning by cases according to the identity of $Q(n)$, in each case obtaining a contradiction by showing that the pruned tree is in fact not OSP.

\begin{enumerate}
\item[$Q(n)=a$]
By definition of pruning, it must be the case that $n$ has two outgoing edges, one labeled $p^1_a$, and the other labeled $p^2_a$.
In this case, for $p_a=p^1_a$ (the ``true preferences''), $p_b=p^1_b$, and $p_c=p^2_c$, we have that $C_a^{\impl}(\bar{p})=C_a^{\bar{q}}(\bar{p})=x$, yet for $p'_a=p^2_a$ (a ``possible manipulation''), $p'_b=p^2_b$, and $p'_c=p^2_c$, we have that $C_a^{\impl}(\bar{p}')=C_a^{\bar{q}}(\bar{p}')=z$, even though $C_a^{\impl}(\bar{p}')=z\succ_a x = C_a^{\impl}(\bar{p})$ according to $p_a$ (by definition of $n$, both $\bar{p}$ and $\bar{p}'$ pass through~$n$, and $p_a$ and $p_a'$ diverge at $n$), and so the mechanism of the pruned tree indeed is not OSP --- a contradiction.
\item[$Q(n)=c$]
By definition of pruning, it must be the case that $n$ has two outgoing edges, one labeled $p^1_c$, and the other labeled $p^2_c$.
In this case, for $p_c=p^1_c$ (the ``true preferences''), $p_a=p^1_a$, and $p_b=p^2_b$, we have that $C_c^{\impl}(\bar{p})=C_c^{\bar{q}}(\bar{p})=y$, yet for $p'_c=p^2_c$ (a ``possible manipulation''), $p'_a=p^2_a$, and $p'_b=p^1_b$, we have that $C_c^{\impl}(\bar{p}')=C_c^{\bar{q}}(\bar{p}')=x$, even though $C_c^{\impl}(\bar{p}')=x\succ_c y = C_c^{\impl}(\bar{p})$ according to $p_c$ (by definition of $n$, both $\bar{p}$ and $\bar{p}'$ pass through~$n$, and $p_c$ and $p_c'$ diverge at $n$), and so the mechanism of the pruned tree indeed is not OSP --- a contradiction.
\item[$Q(n)=b$]
By definition of pruning, it must be the case that $n$ has at least two outgoing edges, and therefore has at least one edge labeled by a singleton preference list $p^i_b$. We prove this case by reasoning by subcases according to the value of $i$.
\begin{enumerate}
\item[i=1]
In this case, for $p_b=p^i_b=p^1_b$ (the ``true preferences''), $p_a=p^1_a$, and $p_c=p^2_c$, we have that $C_b^{\impl}(\bar{p})=C_b^{\bar{q}}(\bar{p})=z$, yet for $p'_b=p^3_b$ (a ``possible manipulation''), $p'_a=p^1_a$, and $p'_c=p^1_c$, we have that $C_b^{\impl}(\bar{p}')=C_b^{\bar{q}}(\bar{p}')=y$, even though $C_b^{\impl}(\bar{p}')=y\succ_b z = C_b^{\impl}(\bar{p})$ according to $p_b$ (by definition of $n$, both $\bar{p}$ and $\bar{p}'$ pass through~$n$, and since $i=1$ we have that $p_b=p^i_b$ and $p_b'\ne p^i_b$ diverge at $n$), and so the mechanism of the pruned tree indeed is not OSP --- a contradiction.
\item[i=2]
In this case, for $p_b=p^i_b=p^2_b$ (the ``true preferences''), $p_a=p^2_a$, and $p_c=p^1_c$, we have that $C_b^{\impl}(\bar{p})=C_b^{\bar{q}}(\bar{p})=z$, yet for $p'_b=p^3_b$ (a ``possible manipulation''), $p'_a=p^1_a$, and $p'_c=p^2_c$, we have that $C_b^{\impl}(\bar{p}')=C_b^{\bar{q}}(\bar{p}')=x$, even though $C_b^{\impl}(\bar{p}')=x\succ_b z = C_b^{\impl}(\bar{p})$ according to $p_b$ (by definition of $n$, both $\bar{p}$ and $\bar{p}'$ pass through~$n$, and since $i=2$ we have that $p_b=p^i_b$ and $p_b'\ne p^i_b$ diverge at $n$), and so the mechanism of the pruned tree indeed is not OSP --- a contradiction.
\item[i=3]
In this case, for $p_b=p^i_b=p^3_b$ (the ``true preferences''), $p_a=p^1_a$, and $p_c=p^1_c$, we have that $C_b^{\impl}(\bar{p})=C_b^{\bar{q}}(\bar{p})=y$, yet for $p'_b=p^2_b$ (a ``possible manipulation''), $p'_a=p^1_a$, and $p'_c=p^2_c$, we have that $C_b^{\impl}(\bar{p}')=C_b^{\bar{q}}(\bar{p}')=x$, even though $C_b^{\impl}(\bar{p}')=x\succ_b y = C_b^{\impl}(\bar{p})$ according to $p_b$ (by definition of $n$, both $\bar{p}$ and $\bar{p}'$ pass through~$n$, and since $i=3$ we have that $p_b=p^i_b$ and $p_b'\ne p^i_b$ diverge at $n$), and so the mechanism of the pruned tree indeed is not OSP --- a contradiction.\qedhere
\end{enumerate}
\end{enumerate}

\end{proof}

\section{Two-sides-querying mechanisms}\label{two-sided-mechanisms}

In this \lcnamecref{two-sided-mechanisms}, we explicitly  adapt  the definitions in \cref{osp} for two-sides-querying mechanisms, where the (strategic) participants include not only the men but also the women, as in \cref{sec:two-sided}. The differences from the definitions in \cref{osp} are marked with an \uline{underscore}. We emphasize that these definitions, like those in \cref{osp}, are also a special case of the definitions in \cite{Li2015}. Define $\prefs\eqdef\prefs(W)^M\times\prefs(M)^W$. For every two-sided preference profile $\bar{r}=(\bar{p},\bar{q})\in\prefs$, we write $r_m=p_m$ for every $m\in M$ and $r_w=q_w$ for every $w\in W$.

\begin{definition}[two-sides-querying matching mechanism]
A \uline{\emph{two-sides-querying}} (extensive-form) \emph{matching mechanism} for $M$ \uline{and} $W$ consists of:
\begin{enumerate}
\item
A rooted tree $T$.
\item
A map $X:L(T)\rightarrow\matchings(M,W)$ from the leaves of $T$ to matchings between $M$ and $W$.
\item
A map $Q:V(T)\setminus L(T)\rightarrow M\uline{\,\cup\,W}$, from internal nodes of $T$ to participants $M\uline{\,\cup\,W}$.
\item
A map $A:E(T)\rightarrow 2^{\prefs(W)}\uline{\,\cup\,2^{\prefs(M)}}$, from edges of $T$ to predicates over $\prefs(W)$ \uline{or over $\prefs(M)$}, such that  all of the following hold:
\begin{itemize}
\item
Each such predicate must match at least one element in $\prefs(W)$ if $Q(n)\in M$ \uline{and at least one element in $\prefs(M)$ if $Q(n)\in W$}.
\item
The predicates corresponding to edges outgoing from the same node are disjoint.
\item
The disjunction (i.e., set union) of all predicates corresponding to edges outgoing from a node $n$ equals the predicate corresponding to the last edge outgoing from a node labeled $Q(n)$ along the path from the root to $n$, or, if no such edge exists, to the predicate matching all elements of $\prefs(W)$ if $Q(n)\in M$ \uline{and all elements of $\prefs(M)$ if $Q(n)\in W$}.\footnote{In particular, this implies that the predicates corresponding to edges outgoing from a node $n$ are predicates over $\prefs(W)$ if $Q(n)\in M$ \uline{and over $\prefs(M)$ if $Q(n)\in W$}.}

\end{itemize}
\end{enumerate}
\end{definition}

A \uline{two-sides-querying} preference profile $\bar{r}\in\uline{\prefs}$ is said to \emph{pass through} a node $n \in V(T)$ if, for each edge~$e$ along the path from the root of $T$ to $n$, it is the case that $r_{Q(n')}\in A(e)$, where $n'$ is the source node of $e$. That is, the nodes through which $\bar{r}$ passes are the nodes of the path that starts from the root of $T$ and follows, from each internal node $n'$ that it reaches, the unique outgoing edge whose predicate matches the preference list of $Q(n')$.

\begin{definition}[implemented matching rule]
Given a \uline{two-sides-querying} extensive-form matching mechanism~$\impl$, we denote by $C^{\impl}$, called the \uline{two-sides-querying} matching rule \emph{implemented by} $\impl$, the \uline{two-sides-querying} matching rule  mapping a \uline{two-sides-querying} preference profile $\bar{r}\in\uline{\prefs}$ to the matching $X(n)$, where $n$ is the unique leaf through which $\bar{r}$ passes. Equivalently, $n$ is the node in $T$ obtained by traversing $T$ from its root, and from each internal node $n'$ that is reached, following the unique outgoing edge whose predicate matches the preference list of $Q(n')$.
\end{definition}

Two preference lists $r,r'\in\prefs(W)\uline{\,\cup\,\prefs(M)}$ are said to \emph{diverge} at a node $n\in V(T)$ if there exist two distinct edges~$e,e'$ outgoing from $n$ such that $r\in A(e)$ and $r'\in A(e')$.\footnote{In particular, this implies that $r,r'\in\prefs(W)$ if $Q(n)\in M$ \uline{and that $r,r'\in\prefs(M)$ if $Q(n)\in W$}.}

\begin{definition}[obvious strategy-proofness (OSP)]
Let $\impl$ be a \uline{two-sides-querying} extensive-form matching mechanism.
$\impl$ is said to be \emph{obviously strategy-proof (OSP) for a participant $a\in M\uline{\,\cup\,W}$} if for every node $n$ with $Q(n)=a$ and for every
$\bar{r},\bar{r}'\in\uline{\prefs}$ that both pass through $n$ such that  $p_a$ and $p'_a$ diverge at $n$, it is the case that $C^{\impl}_a(\bar{r}) \succeq_a C^{\impl}_a(\bar{r}')$ according to~$r_a$. In other words, the worst possible outcome for $a$ when acting truthfully (i.e., according to $r_a$) at $n$ is no worse than the best possible outcome for $a$ when misrepresenting his or her preference list to be $r'_a$ at $n$.
\end{definition}

\begin{definition}[OSP-implementability]
A \uline{two-sides-querying} matching rule $C:\uline{\prefs}\rightarrow\matchings(M,W)$ is said to be \emph{OSP-implementable} for a set of participants $A\subseteq M\uline{\,\cup\,W}$ if $C=C^{\impl}$ for some \uline{two-sides-querying} matching mechanism~$\impl$ that is OSP for (every participant in)~$A$.
\end{definition}

\section{From one strategic side to two strategic sides}
\label{App-two-sided-proof}

The next \lcnamecref{one-vs-two} allows us to  obtain results in the two-strategic-sides model from the results obtained in the one-strategic-side model (as alluded to in the discussion opening \cref{sec:two-sided}, the converse is not as immediate, e.g., neither \cref{acyclical-osp} nor \cref{not-osp-whp} is an immediate corollary of results that are naturally stated for two-sides-querying mechanisms/matching rules). Indeed, \cref{not-osp-two-sides,osp-two-sides} both follow via this \lcnamecref{one-vs-two} from the respective analogous results for one-side-querying mechanisms/matching rules.

\begin{lemma}[relation between one-side-querying and two-sides-querying OSP mechanisms]\label{one-vs-two}
For every $M'\subseteq M$, there exists a stable two-sides-querying matching mechanism that is OSP for $M'$ if and only if for every $\bar{q}\in\prefs(W)^M$ there exists a $\bar{q}$-stable one-side-querying matching mechanism that is OSP for $M'$.
\end{lemma}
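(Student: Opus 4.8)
The statement is a biconditional, and I would prove each direction separately. The easier direction is "$\Rightarrow$": suppose $\impl$ is a stable two-sides-querying matching mechanism that is OSP for $M'$. Fix an arbitrary $\bar q\in\prefs(W)^M$. The plan is to ``hardwire'' the women's reports to $\bar q$: walk down the tree of $\impl$, and whenever an internal node $n$ has $Q(n)=w\in W$, replace the subtree rooted at $n$ by the subtree rooted at the child of $n$ reached by following the unique edge whose predicate $A(e)$ matches $q_w$ (such an edge exists and is unique by the covering/disjointness axioms in the definition of a two-sides-querying mechanism). Iterating this contraction over all women-labelled nodes yields a one-side-querying mechanism $\impl^{\bar q}$, which I must check is a legitimate one-side-querying matching mechanism (the predicate axioms are inherited, since along any root-to-leaf path the women's edges were deterministic and are simply deleted) and which implements the one-side-querying matching rule $\bar p\mapsto C^\impl(\bar p,\bar q)$. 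That rule is $\bar q$-stable because $C^\impl$ is stable. Finally, OSP for $M'$ is inherited: any two profiles $\bar p,\bar p'$ passing through a node $n$ with $Q(n)=m\in M'$ in $\impl^{\bar q}$ correspond to the two-sides profiles $(\bar p,\bar q),(\bar p',\bar q)$, which pass through the corresponding node of $\impl$ and have $p_m,p'_m$ diverging there, so the OSP inequality for $\impl$ transfers verbatim.

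\textbf{The other direction.} For "$\Leftarrow$", suppose that for every $\bar q$ there is a $\bar q$-stable one-side-querying mechanism $\impl^{\bar q}$ that is OSP for $M'$. The plan is to build a two-sides-querying mechanism in two phases: first query \emph{all} the women, one at a time in some fixed order $w_1,\dots,w_{|W|}$, with the branching at the node labelled $w_i$ given by the singleton predicates $\{q_{w_i}\}$ ranging over all of $\prefs(M)$ (so after this phase the path taken has revealed $\bar q$ exactly); then, at the node reached after the women's phase corresponding to a particular $\bar q$, graft on a copy of the one-side-querying mechanism $\impl^{\bar q}$. The resulting mechanism implements the two-sides-querying rule $(\bar p,\bar q)\mapsto C^{\impl^{\bar q}}(\bar p)$, which is stable since each $\impl^{\bar q}$ is $\bar q$-stable. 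It is OSP for every $m\in M'$: the nodes labelled $m$ all lie in some grafted copy $\impl^{\bar q}$, and two profiles $(\bar p,\bar q),(\bar p',\bar q')$ both passing through such a node must have $\bar q=\bar q'$ (the women's phase is a path, so agreeing up to that node forces the same women's reports), hence they reduce to profiles $\bar p,\bar p'$ passing through a node of $\impl^{\bar q}$ with $p_m,p'_m$ diverging, and OSP of $\impl^{\bar q}$ gives the desired inequality. I should also remark that the women are never queried after being (trivially, by construction) made to reveal their full types, so there is no OSP obligation toward them to check — which is consistent with the statement only asserting OSP for $M'\subseteq M$.

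\textbf{Main obstacle.} The conceptual steps are routine; the real care is bookkeeping. In the "$\Leftarrow$" direction, one must confirm that grafting distinct $\impl^{\bar q}$'s at distinct leaves of the women's-phase tree yields an object satisfying the structural axioms of Definition (two-sides-querying matching mechanism) — in particular the third bullet (the disjunction of predicates out of a node equals the predicate on the last edge out of a same-labelled ancestor, or the full set if none): this holds because within a grafted copy the nearest $m$-labelled ancestor is still inside that same copy, so the condition is inherited from $\impl^{\bar q}$, and at the root of the graft there is no $m$-labelled ancestor so the full set $\prefs(W)$ is required, which is exactly what $\impl^{\bar q}$'s root provides. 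In the "$\Rightarrow$" direction, the analogous subtlety is that after contracting all women's nodes, the ``last edge out of a same-labelled ancestor'' clause still refers only to $m$-labelled ancestors, and deleting $w$-labelled nodes does not disturb those — so again the axiom survives. I expect no genuine difficulty here, only the need to state the contraction/grafting operations precisely enough that these inheritance checks are transparent; I would present the construction explicitly and then dispatch the three verifications (well-formedness, stability, OSP) in a short paragraph each.
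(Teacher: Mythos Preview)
Your proposal is correct and matches the paper's own proof essentially step for step: the $\Rightarrow$ direction is exactly the pruning/hardwiring of women's reports (the paper invokes Li's pruning-preserves-OSP proposition where you spell out the inheritance directly), and the $\Leftarrow$ direction is exactly the ``query all women first, then graft $\impl^{\bar q}$'' construction. Your extra bookkeeping on the predicate-disjunction axiom is more explicit than the paper's ``it is straightforward to verify,'' but the argument is the same.
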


\begin{proof}
$\Rightarrow$: Assume that there exists a stable two-sides-querying matching mechanism~$\impl$ that is OSP for $M'$, and let $\bar{q}\in\prefs(W)^M$. We prune (see the proof of \cref{three-by-three} for an explanation of pruning) the tree of $\impl$ such that  the women's preference profile is fixed to be $\bar{q}$. The resulting (pruned) mechanism is a \emph{one-side-querying} matching mechanism that is $\bar{q}$-stable and (by the same proposition in \cite{Li2015} that is used in \cref{three-by-three}) OSP for $M'$, as required.

$\Leftarrow$: Assume that for every $\bar{q}\in\prefs(M)^W$ there exists a $\bar{q}$-stable one-side-querying matching mechanism $\impl^{\bar{q}}$ that is OSP for $M'$. We construct a stable \emph{two-sides-querying} matching mechanism $\impl$ as follows: first ask all women, in some order, for all of their preference lists; the leaves of the tree so far are thus in one-to-one correspondence with preference profiles $\bar{q}\in\prefs(M)^W$ that pass through them. Next, at each ``interim leaf'' $n^{\bar{q}}$ corresponding to a preference profile $\bar{q}\in\prefs(M)^W$ (that passes through it), construct a subtree that is identical to the tree of $\impl^{\bar{q}}$, with $n^{\bar{q}}$ as its root. It is straightforward to verify that the fact that each $\impl^{\bar{q}}$ is $\bar{q}$-stable and OSP for $M'$ implies that $\impl$ is stable and OSP for $M'$.
\end{proof}

\end{document}